\newtheorem{theorem}{Theorem}[section]
\newtheorem{property}{Property}[section]
\newtheorem{proof}{Proof}
\newtheorem{lemma}{Lemma}[section]
\newcommand\numberthis{\addtocounter{equation}{1}\tag{\theequation}}
\newcommand{\mcal}{\mathcal}
\newcommand{\mb}{\mathbf}
\newcommand{\mbb}{\mathbb}
\newcommand{\RN}[1]{%
      \textup{\uppercase\expandafter{\romannumeral#1}}%
  }
\title{Wireless Network Simplification:\\Beyond Diamond Networks}
\author{
        \IEEEauthorblockN{Yahya H. Ezzeldin, Ayan Sengupta, Christina Fragouli \\Department of Electrical Engineering, University of California Los Angeles, USA \\ \{yahya.ezzeldin, ayansg, christina.fragouli\}@ucla.edu
    } }
\begin{document}
\maketitle
\begin{abstract}
We consider an arbitrary layered Gaussian relay network with $L$ layers of $N$ relays each, from which we select subnetworks with $K$ relays per layer. We prove that: (i) For arbitrary $L, N$ and $K = 1$, there always exists a subnetwork that approximately achieves $\frac{2}{(L-1)N + 4}$ $\left(\mbox{resp.}\frac{2}{LN+2}\right)$ of the network capacity for odd $L$ (resp. even $L$), (ii) For $L = 2, N = 3, K = 2$, there always exists a subnetwork that approximately achieves $\frac{1}{2}$ of the network capacity. We also provide example networks where even the best subnetworks achieve exactly these fractions (up to additive gaps). Along the way, we derive some results on MIMO antenna selection and capacity decomposition that may also be of independent interest.
\end{abstract}

\IEEEpeerreviewmaketitle
\section{Introduction} \label{sec:intro}
Network simplification looks at the following problem: given a Gaussian relay network, can one provide tight guarantees on the fraction of network capacity that a subnetwork of a given size can always retain (approximately, within an additive constant), irrespective of the channel configurations in the network? For the Gaussian diamond network, where a source communicates with a destination via a single layer of $N$ non-interfering relays, this question was answered in \cite{NOF_Simplification}, where the authors showed that one can always find a subnetwork comprising $K$ relays out of the available $N$, that (approximately) achieves a fraction $\frac{K}{K+1}$ of the network capacity. However, for layered relay networks with more than two hops (i.e., more than one layer of relays between source and destination), the problem has so far remained open.

In this paper, we first characterize the guarantees achievable with subnetworks comprising exactly one relay from each layer, over arbitrary layered Gaussian networks--in other words, we analyze the performance of \emph{routing}. While there exists an abundance of (low-complexity) algorithms to \emph{find} the best route through a network, to the best of our knowledge, there does not exist a result proving universal performance guarantees for routing with respect to an optimal (capacity-achieving) utilization of the entire network. 

Next, we provide guarantees when we select a subnetwork with two relays per layer, from a network with two layers of three relays each. It turns out that even this case is rather challenging to characterize, as will be evidenced by the proofs.

At the heart of characterizing subnetwork performance, is the problem of analyzing how subsets (in terms of antennas) of a MIMO channel behave with respect to the entire MIMO channel. This is because the (approximate) capacity expression for relay networks is given by a minimum of all \emph{cut values} in the network, where each cut is a sum of layer-wise MIMO terms from nodes in the source-side of the cut to those in the destination-side \cite{ADT2011}. Hence, we needed to come up with new results on MIMO antenna selection and MIMO capacity decomposition, that may also be of independent interest.

At a high level, our proofs of capacity guarantees proceed as follows: we assume that for any arbitrary layered network, \emph{all} subnetworks of the given size requirement achieve less than a predetermined fraction of the capacity. This implies that the value of at least one cut in every subnetwork falls below the above fraction of capacity; we collate all links from these failing cuts in every subnetwork into a set $\Lambda$. The crux of the problem is in subsequently demonstrating that inside the set $\Lambda$, there exists a cut for the entire network that has a value less than the network capacity, thus establishing a fundamental contradiction, and guaranteeing the existence of \emph{at least one} subnetwork that achieves a higher capacity than the predetermined fraction. It is in this step that we need to use the MIMO selection and decomposition results to arrive at (approximate) expressions for full network cuts that are compatible with those from failing subnetworks.

In the cases for which we derive guarantees in this paper, we also demonstrate that these are indeed tight (up to an additive constant), i.e., there exist channel instantiations where even the best subnetwork of the given size only achieves the capacity fraction guaranteed by the proof of existence.


\paragraph*{Related Work}
For the Gaussian diamond network, universal capacity guarantees for $k$-relay subnetworks were provided in \cite{NOF_Simplification}. \cite{Javad_Simplification} extended the work of \cite{NOF_Simplification} for some scenarios of the diamond network with multiple antennas at the nodes. In the realm of scheme-specific performance guarantees (as opposed to guaranteeing capacity fractions), the work of \cite{agnihotri12} proved upper bounds on multiplicative and additive gaps for AF-based relay selection, primarily for diamond networks.

Another thread of related work pertains to algorithm design for finding near-optimal subnetworks. \cite{BSF_Selection} and \cite{Ozgur_Selection} made progress in that direction, by providing low-complexity heuristic algorithms for near-optimal relay selection. 

\paragraph*{Organization}
Section~\ref{sec:model} describes the system model and also provides background on expressions and notation used throughout the paper.
Section~\ref{sec:main_results} contains our main results on subnetwork capacity.
Section~\ref{sec:upperbounds} presents the MIMO selection and decomposition lemmas, which are of key importance in the subsequent proofs.
The proofs of our main results are outlined in Sections~\ref{sec:proof_k1} and \ref{sec:proof_k2_N3_L2}.

\section{Model and Preliminaries} \label{sec:model}
Consider a layered network with $L+2$ layers, indexed from $0$ to $L+1$.
The first and last layers consist of one node each--the \textit{Source} ($S$) and the \textit{Destination} ($D$) respectively.
All intermediate layers consist of $N$ relay nodes.

A node in the network is labeled by the tuple $(l,i)$ which represents the layer ($l$) containing the node and the node index ($i$) within that layer.
Following this notation, $S$ and $D$ are labeled as $(0,1)$ and $(L+1,1)$ respectively.
For convenience, we will refer to these two nodes as $S$ and $D$ wherever needed.

At any time $t$, the received signal $Y_j^{(l+1)}[t]$ at node $(l+1,j)$ is a function of the transmitted signals from nodes in layer $l$, 
\begin{align*}
    Y_j^{(l+1)}[t] = \sum_{i = 1}^{N} h^{(l)}_{ij} X_i^{(l)}[t] + W_{j}^{(l+1)}[t]
\end{align*}
where $X_i^{(l)}$ is the transmitted signal from node $(l,i)$, and the additive white Gaussian noise $W_{j}^{(l+1)} \left(\sim \mcal{CN}(0,1)\right)$ is independent of the inputs, as well as of the noise terms at the other nodes.
The (complex) channel gain between the nodes $(l,i)$ and $(l+1,j)$ is denoted by $h^{(l)}_{ij} \in \mbb{C}$. We assume that the transmitted signals from each network node satisfy an average power constraint $\mathbb{E}[|X_i^{(l)}|^2]\leq 1 \quad \forall (l,i)$.

For a more compact representation of the signal flow through the network, we adopt the following notation: we define $\mcal{M}_l$ to be the set of nodes in layer $l$ and $\mcal{M} = \cup_{l=0}^{L+1} \mcal{M}_l$ denotes the set of all nodes in the network.
We collect the channel coefficients from the nodes in $\mcal{M}_l$ to those in $\mcal{M}_{l+1}$ into $\mb{H}^{(l)} \in \mbb{C}^{N \times N}$, where $h^{(l)}_{ij}$ is the element in the $j$-th row and $i$-th column of $\mb{H}^{(l)}$.
For a subset of nodes $\mb{y}_l \subseteq \mcal{M}_l$ and $\mb{y}_{l+1} \subseteq \mcal{M}_{l+1}$, $\mb{H}^{(l)}(\mb{y}_l,\mb{y}^c_{l+1})$ denotes the submatrix of $\mb{H}^{(l)}$ between nodes in $\mb{y}_l$ and nodes in $\mb{y}^c_{l+1}$, where  $\mb{y}^c_{l+1} = \mcal{M}_{l+1} \backslash \mb{y}_{l+1}$.

We define a cut of the network by $\mcal{Y} \subseteq \mcal{M}$, such that $S \in \mcal{Y}$ and $D \in \mcal{Y}^c$. We can represent $\mcal{Y}$ by subsets of nodes in each layer as $\mcal{Y} = \{ S, \mb{y}_1,\mb{y}_2,\cdots,\mb{y}_L \}$, where $\mb{y}_i \subseteq \mcal{M}_i$ for $ 1 \leq i \leq L$.

For such a network, the exact capacity is not known. However, in \cite{ADT2011} the authors prove that it is within a constant gap\footnote{By constant gap, we refer to terms that are independent of the channel coefficients in the network.} from $\widebar{C}$, which is given by

{\small
 \vspace{-0.1in}
\begin{align}
    \widebar{C} \triangleq \min_\mcal{Y} \sum_{l=0}^{L}\log\text{det} \left( \mb{I} + \mb{H}^{(l)}(\mb{y}_l,\mb{y}^c_{l+1}) \mb{H}^{(l)}(\mb{y}_l,\mb{y}^c_{l+1})^\dagger \hspace{-0.02in}\right)
    \label{eq:cutset_orig}
\end{align}
}
Therefore, in the rest of this paper, we work with the approximate capacity $\widebar{C}$ in place of the network capacity to prove our results.

For the proofs in Sections \ref{sec:proof_k1} and \ref{sec:proof_k2_N3_L2}, we additionally use the following notations for the individual links and MIMO capacities. We label the link from node $(l,i)$ to node $(l+1,j)$ by the tuple $(l,i,j)$ and denote its capacity by:
\[
    R^{(l)}_{ij} = \log\left(1 + \vert h^{(l)}_{ij} \vert^2\right)
\]
The capacity of the MIMO channel from the set of nodes $\mb{u}$ in layer $l$ to the set of nodes $\mb{v}$ in layer $l+1$ is denoted by
\[
    M(l)^{\{\mb{v}\}}_{\{\mb{u}\}} = \log\det\left(\mb{I} + \mb{H}^{(l)}(\mb{u},\mb{v}) \mb{H}^{(l)}(\mb{u},\mb{v})^\dagger \right)
\]

\section{Main Results} \label{sec:main_results}
The main results in this paper are summarized in the following two theorems.

\begin{theorem}
For every layered Gaussian relay network with $L$ relay layers and $N$ relays per layer, there exists a subnetwork with $K=1$ relay per layer such that the capacity\footnote{In a line network, the exact capacity is achievable by a Decode-Forward (DF) scheme; hence we refer to the exact capacity $C_{1}^*$ instead of the approximate capacity $\widebar{C}_1^*$} $C_1^*$ of this subnetwork satisfies:
    \begin{equation}
        \begin{aligned}
            C_1^* \geq \begin{cases}
        \dfrac{2}{(L-1)N + 4} \widebar{C}- G_1, & \quad L\ \text{odd} \\
        \dfrac{2}{LN+2} \widebar{C}- G_1, & \quad L\ \text{even}
            \end{cases}
        \end{aligned}
    \end{equation}
    where $G_1 = 4\log(N)$.
    Further, there exists a class of networks such that:
    \[
        C_1^* \leq \begin{cases}
        \dfrac{2}{(L-1)N + 4} \widebar{C}, & \quad L\ \text{odd} \\
        \dfrac{2}{LN+2} \widebar{C}, & \quad L\ \text{even}
            \end{cases}
    \]
    \label{thm:single_path}
\end{theorem}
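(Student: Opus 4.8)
The plan is to prove both halves by reducing everything to one combinatorial statement about $S$-$D$ edge cuts in the complete layered DAG (endpoint layers of size one, the $L$ internal layers of size $N$, complete bipartite connections between consecutive layers), and then passing back and forth to the Gaussian model through the MIMO decomposition bounds of Section~\ref{sec:upperbounds}. For the lower bound on $C_1^{*}$, note first that a one-relay-per-layer subnetwork is a line, whose exact capacity is the minimum of its hop rates and is DF-achievable (cf.\ the footnote), so $C_1^{*}$ is the max over source-to-destination paths $P$ (one relay per layer) of $\min_{l} R^{(l)}$ along $P$. Suppose for contradiction that $C_1^{*} < \beta\,\widebar{C} - G_1$, with $\beta$ the claimed fraction and $G_1 = 4\log N$; writing $\rho = \beta\,\widebar{C} - G_1$ (which we may assume is nonnegative, else the bound is trivial), every path $P$ then contains a ``weak'' link of rate below $\rho$. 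Pick one such weak link per path and collect them into a set $\Lambda$, which therefore meets every $S$-$D$ path. Build the network cut $\mcal{Y}$ whose source side in layer $l$ is the set of nodes reachable from $S$ through links \emph{outside} $\Lambda$ (keeping in reserve the symmetric reachability-from-$D$ cut and hybrids of the two): by construction every link crossing $\mcal{Y}$ lies in $\Lambda$, so each hop-$l$ MIMO term has all entries of rate at most $\rho$, and the decomposition bound yields $M(l) \le \min(|\mb{y}_l|,|\mb{y}^c_{l+1}|)\,(\rho + \log N + 1)$; hence the value of $\mcal{Y}$ is at most $\Sigma(\mcal{Y})\,(\rho + \log N + 1)$ with $\Sigma(\mcal{Y}) := \sum_{l=0}^{L}\min(|\mb{y}_l|,|\mb{y}^c_{l+1}|)$.

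Everything then reduces to the combinatorial claim that for every edge cut $\Lambda$ one can choose such a (reachability-type) cut with $\Sigma(\mcal{Y}) \le S^{*}$, where $S^{*} := \tfrac{(L-1)N}{2}+2$ for odd $L$ and $S^{*} := \tfrac{LN}{2}+1$ for even $L$ --- exactly the reciprocals of the two claimed fractions. Granting this and taking $\beta = 1/S^{*}$, we get $\widebar{C} \le \Sigma(\mcal{Y})(\rho + \log N + 1) \le S^{*}(\beta\,\widebar{C} - G_1 + \log N + 1) = \widebar{C} - S^{*}(G_1 - \log N - 1) < \widebar{C}$ as soon as $N \ge 2$ (the case $N = 1$ is trivial, the network being a line), contradicting the minimality of $\widebar{C}$.

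Proving the combinatorial claim is, I expect, the real obstacle. The crude estimate $\min(a,b) \le \tfrac{a+b}{2}$ applied to the pure reachability-from-$S$ cut telescopes only to about $\tfrac{LN}{2}$, which overshoots $S^{*}$ by roughly $\tfrac{N}{2}$ for odd $L$ and by a constant for even $L$, so the cut must be chosen carefully. My approach would be to study the ``frontier'' of the reachable sets: the first layer $\ell_0$ at which the set reachable from $S$ through $\Lambda^{c}$ becomes empty, and the symmetric index $\ell_1$ from the $D$ side. When $\ell_0$ is small one cuts at the collapse very cheaply; when it is large the complementary reachability-from-$D$ cut is cheap; and when neither holds there is a band of layers in which both reachable sets are empty and a thin middle cut exists. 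The exact constants --- and in particular why odd $L$ saves one full layer's worth of cost relative to even $L$ --- should fall out of optimizing over the layer at which one switches from the $S$-side cut to the $D$-side cut, together with a parity argument on the layer indices.

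For the tightness statement I would build a family of networks realizing the extremal case of the same combinatorial claim. Fix an edge cut $\Lambda$ that meets every path, admits a path meeting it in exactly one link, and whose only network cuts with all crossing links in $\Lambda$ have $\Sigma = S^{*}$ exactly (this generalizes the two-relay diamond instance where one relay is strong-in/weak-out and the other weak-in/strong-out). Give every link of $\Lambda$ a common rate $c$, let every other link have rate tending to infinity, and pick the weak submatrices to be scaled Hadamard-type so that each all-weak MIMO term actually attains $\min(a,b)(c + O(\log N))$. Then every path is bottlenecked at rate $c$, so $C_1^{*} \le c$; meanwhile every cut touching a strong link has unbounded value and every all-weak cut has value at least $S^{*}c - O(N\log N)$, so $\widebar{C} \ge S^{*}c - O(N\log N)$, giving $C_1^{*} \le \tfrac{1}{S^{*}}\,\widebar{C}$ up to the additive gap, which is exactly the stated fraction.
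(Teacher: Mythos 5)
Your overall architecture for the lower bound matches the paper's: assume every route is bottlenecked below $\beta\widebar{C}-G_1$, collect the weak links into a set that separates $S$ from $D$, extract a full-network cut all of whose crossing links are weak, and bound each layer's MIMO term by $\min(|\mb{y}_l|,|\mb{y}_{l+1}^c|)$ times the best single link plus a $\log$ penalty (Lemma~\ref{lemma:zero_gap_thm}). But the entire argument hinges on the combinatorial claim that the resulting cut satisfies $\sum_{l=0}^{L}\min(|\mb{y}_l|,|\mb{y}_{l+1}^c|)\le S^{*}$, and you leave that claim unproven, explicitly flagging it as ``the real obstacle'' and offering only a sketch of a frontier/reachability case analysis with the constants ``expected to fall out.'' That is the genuine gap. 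Moreover, your plan to prove it by \emph{carefully choosing} the cut is misdirected: the paper's Property~\ref{prpty:minimalcuts_N2} shows the bound holds for \emph{every} cut $\mcal{Y}$, so no reachability frontier or switching layer is needed. The proof (Appendix~\ref{appendix:T_Y_proof}) is a short relaxation argument: write $T(\mcal{Y})=\sum_l t_l$ with $t_l=\min(s_l,N-s_{l+1})$, drop one of the two constraints on each $t_l$ with alternating parity so that $t_{2i}\le s_{2i}$ and $t_{2i-1}\le N-s_{2i}$ telescope to $\tfrac{(L-1)N}{2}$ (odd $L$) or $\tfrac{LN}{2}$ (even $L$), and bound the endpoint terms by $1$. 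Your instinct that the naive $\min(a,b)\le\tfrac{a+b}{2}$ averaging is too lossy is correct, but the fix is a better bound on $T$ for arbitrary cuts, not a cleverer cut selection; the odd/even discrepancy comes from whether the last term $t_L\le\min(s_L,1)$ is charged as $1$ or absorbed into the telescoping, not from a parity argument on reachable sets.

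The tightness half has the same character: you describe the properties an extremal weak-link set $\Lambda$ would need (meets every path, some path meets it once, all-weak cuts have $\Sigma=S^{*}$) but do not exhibit one, and the existence of such a configuration for general $L,N$ is precisely the content of the paper's explicit constructions (Figs.~\ref{fig:converse_odd_layers_relays}, \ref{fig:converse_even_layers_relays}), where the weak links are arranged so that the unique minimum cut collects $S^{*}$ weak links of capacity $\alpha\widebar{C}$ each while every other cut picks up a strong link. Note also that your Hadamard-type construction would only give $C_1^{*}\le\tfrac{1}{S^{*}}\widebar{C}$ up to an additive $O(N\log N)$ slack, whereas the theorem asserts the inequality exactly; the paper achieves this by making the weak MIMO blocks diagonal-like (single nonzero links per relay pair), so each weak cut value is exactly a sum of $S^{*}$ link capacities with no MIMO gain to account for.
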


\paragraph*{Implication}From Theorem \ref{thm:single_path}, we note that for a diamond network (i.e., $L$ = 1) with approximate capacity $\widebar{C}_{dia}$, we get $C_1^* \geq \frac{1}{2} \widebar{C}_{dia} - 4\log(N)$, which is consistent with the result proved in \cite{NOF_Simplification} (with a slightly different gap). This theorem also highlights a key difference between the diamond network and general layered networks: unlike the diamond network, the capacity guarantee on single-path routes in general layered networks is inversely proportional to the total number of relays in the network. Thus, a routing protocol (aided by a genie) that selects the optimal route in a wireless layered network (to reliably reduce the complexity of communication), may incur severe losses that increase with the number of relays in the network. This is in contrast to the capacity that can be achieved by engaging all relays in the network via physical layer cooperation techniques \cite{ADT2011}.
\begin{theorem}
    For every layered Gaussian relay network with $L=2$ relay layers and $N=3$ relays per layer, there exists a subnetwork with $K=2$ relays per layer such that the approximate capacity $\widebar{C}_2^*$ of this subnetwork satisfies
    \begin{equation}
        \widebar{C}_2^* \geq \frac{1}{2} \widebar{C} - G_2
    \end{equation}
    where $G_2 = 1.5\log(3)$.
    Further, there exists a class of networks such that:
    \begin{align*}
        \widebar{C}_2^* \leq \frac{1}{2} \widebar{C}
    \end{align*}
    \label{thm:k2_out_of_3}
\end{theorem}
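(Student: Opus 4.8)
The plan is to prove the achievability bound $\widebar{C}_2^* \ge \tfrac12\widebar{C} - G_2$ by contradiction, and to obtain the matching upper bound through an explicit family of networks. Suppose, for the lower bound, that \emph{every} one of the $\binom{3}{2}\cdot\binom{3}{2}=9$ subnetworks obtained by keeping two relays in layer~$1$ (a set $A$) and two relays in layer~$2$ (a set $B$) has capacity strictly below $\tfrac12\widebar{C}-G_2$. Applying the cutset formula inside each subnetwork, every such subnetwork then admits a cut $\{S,\mb{y}_1^{(A,B)},\mb{y}_2^{(A,B)}\}$ of value below $\tfrac12\widebar{C}-G_2$; collect all links crossing these $9$ failing cuts into $\Lambda$. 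The goal is to assemble, using only links of $\Lambda$, a single cut of the \emph{entire} network of value below $\widebar{C}$ — impossible, since $\widebar{C}$ is the minimum cut value — thereby forcing some subnetwork to beat the claimed fraction.

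The first step is to restrict the shape of the failing cuts. Inside a two-relay-per-layer subnetwork a cut $\{S,\mb{y}_1,\mb{y}_2\}$ has value $M(0)^{A\setminus\mb{y}_1}_{S}+M(1)^{B\setminus\mb{y}_2}_{\mb{y}_1}+M(2)^{D}_{\mb{y}_2}$, whose middle term is a sub-MIMO term of $\mb{H}^{(1)}$ of size at most $2\times 2$. Since the source and destination carry single antennas, for any two-relay set the broadcast term $M(0)^{\cdot}_{S}$ (resp.\ multiple-access term $M(2)^{D}_{\cdot}$) differs from its three-relay counterpart $M(0)^{\mcal{M}_1}_{S}$ (resp.\ $M(2)^{D}_{\mcal{M}_2}$) by at most $\log 3$, and $\widebar{C}$ never exceeds either full term; a short computation then shows that at most one layer-$1$ pair and at most one layer-$2$ pair can be ``weak'' enough for the pure broadcast or pure multiple-access cut of a subnetwork to drop below $\tfrac12\widebar{C}-G_2$. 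Hence at least four of the $9$ subnetworks (generically all nine) have a failing cut that genuinely crosses the middle channel $\mb{H}^{(1)}$, and it is these that the argument uses. For such an ``interior'' cut I would apply the MIMO selection and decomposition lemmas of Section~\ref{sec:upperbounds} to replace each $\le 2\times2$ sub-MIMO term of $\mb{H}^{(1)}$ by a sum of individual link capacities $R^{(1)}_{ij}$ (or $1\times2$ terms), up to an additive $O(\log 3)$, so that every failing cut becomes an expression in a bounded number of scalar link capacities together with broadcast and multiple-access terms.

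The heart of the argument is then combinatorial, using that each layer has only three relays. Pigeonholing over which relay each of the relevant subnetworks drops in each layer, and over which side of its failing cut the two surviving relays of each layer lie on, one wants to show that the small scalar quantities produced above — drawn from at most two of the failing cuts — can always be re-packaged into one assignment $\mb{y}_1\subseteq\mcal{M}_1,\ \mb{y}_2\subseteq\mcal{M}_2$ such that each of the three layer-wise terms of the full-network cut $\{S,\mb{y}_1,\mb{y}_2\}$ is bounded, again up to $O(\log 3)$, by the matching small terms from those failing cuts. Summing the three terms, the full cut has value at most $2\bigl(\tfrac12\widebar{C}-G_2\bigr)$ plus accumulated additive losses of at most $2G_2=3\log 3$, hence at most $\widebar{C}$ — the sought contradiction. \textbf{The main obstacle} is precisely this re-packaging: one must verify, for \emph{every} pattern of failing-cut types across the relevant subnetworks (pure-broadcast, pure-MAC, single-middle-link, and the various ``diagonal'' cuts mixing a broadcast term, a middle link, and a multiple-access term), that some valid full-network cut is obtained and that its value is correctly accounted for. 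This is a finite but delicate case analysis driven by the combinatorial structure of $\mb{H}^{(1)}$, and it is what pins the constant to $G_2 = 1.5\log 3$.

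For the matching upper bound I would construct an explicit family of $L=2,\ N=3$ networks as follows: pick the source-to-layer-$1$ and layer-$2$-to-destination gains large enough that the pure broadcast and pure multiple-access cuts lie well above $\widebar{C}$ and therefore never bind, and design the $3\times3$ middle matrix $\mb{H}^{(1)}$ with a symmetric (circulant/staircase-like) pattern so that (a) the full-network approximate capacity equals $2c$ for a free parameter $c$, attained by a cut that genuinely engages all three relays of each layer, while (b) in every one of the $9$ subnetworks some ``diagonal'' cut — pairing one surviving layer-$1$ relay on the source side with one surviving layer-$2$ relay on the sink side through a deliberately weak middle link, plus the forced broadcast and multiple-access contributions — has value at most $c+O(1)$. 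Then $\widebar{C}_2^*\le c+O(1)=\tfrac12\widebar{C}+O(1)$ on this family, which is the claimed tightness up to the additive gap; both (a) and (b) are checked by direct evaluation of the relevant $\log\det$ expressions.
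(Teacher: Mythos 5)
Your overall strategy is the same as the paper's (argue by contradiction, collect the links of the failing cuts of all nine two-per-layer subnetworks into a set $\Lambda$, use the MIMO selection/decomposition lemmas to express full-network cuts in terms of the scalar and small-MIMO quantities appearing in those failing cuts, and exhibit an explicit network for tightness). However, what you label ``the main obstacle'' --- verifying, for every pattern of failing-cut types, that the small terms can be re-packaged into a full-network cut of value below $\widebar{C}$ --- \emph{is} the proof: the paper devotes all of Section~\ref{sec:proof_k2_N3_L2} to it, organized as seven cases over $(z_0,z_2)$, the number of source-side and destination-side links in $\Lambda$, several of which (notably $z_0=z_2=1$) require further subdivision into Type~I/II/III critical cuts and an auxiliary assumption such as $R^{(0)}_{S1}+R^{(2)}_{1D}\geq \tfrac12\widetilde{C}$. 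Deferring this case analysis leaves the constant $G_2$ unestablished and the existence claim unproved.

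Beyond the omission, your arithmetic template does not cover all cases. You propose to bound each full-network cut by the sum of small terms ``drawn from at most two of the failing cuts,'' yielding $<2\bigl(\tfrac12\widebar{C}-G_2\bigr)$ plus at most $3\log 3$ of accumulated loss. But when $z_0=z_2=0$, every failing cut is a full $2\times 2$ MIMO term of $\mb{H}^{(1)}$, and the only full-network cut available for the contradiction is $\mcal{Y}=\{S,\{1,2,3\},\phi\}$ with value $M(1)^{\{1,2,3\}}_{\{1,2,3\}}$; this cannot be bounded by an additive pairing of two $2\times2$ terms. The paper instead invokes the \emph{multiplicative} antenna-selection bound of Lemma~\ref{lemma:zero_gap_thm}, $M(1)^{\{1,2,3\}}_{\{1,2,3\}}\leq \tfrac32\max M(1)^{\{\mb{v}\}}_{\{\mb{u}\}}+3\log 3$, giving $\tfrac32\cdot\tfrac12\widetilde{C}=\tfrac34\widetilde{C}<\widetilde{C}$ --- a different arithmetic from your ``two halves.'' Relatedly, the paper avoids your in-line constant bookkeeping by first defining $\widetilde{C}=\min_{\mcal{Y}}\min_{f\in\mcal{F}_{\mcal{Y}}}f(\mcal{Y})$ with $\widebar{C}\leq\widetilde{C}+3\log 3$ fixed once (Table~\ref{tab:class_to_f_matching}, Appendix~\ref{appendix:gaps}), so that the entire contradiction argument is run against $\widetilde{C}$ with no additive constants appearing; your plan of accumulating $O(\log 3)$ losses inside the case analysis is workable in principle but is exactly where the claimed total of $2G_2$ would need to be verified case by case, and you have not done so.
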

Theorem \ref{thm:k2_out_of_3} presents a first step towards the characterization of network simplification for layered networks when we select $K>1$ relays per layer. Differently from single-path subnetworks $(K=1)$, where the subnetwork cuts are individual links, the cuts of larger subnetworks span the entire network across different layers and are therefore harder to analyze, as we show in our proof of this theorem.

\section{MIMO Lemmas and Cut Approximations} \label{sec:upperbounds}
In this section, we present two results on Gaussian MIMO channels with i.i.d inputs.
These allow us to develop a class of tunable upper bounds for $\bar{C}$.

We consider an $M \times N$ Gaussian MIMO channel with an i.i.d input vector $X \in \mbb{C}^{M \times 1}$, defined by:
\[
    Y = \mb{H}X + W
\]
where $\mb{H}$ is the Gaussian channel matrix and $W$ is a vector of i.i.d. Gaussian random variables $w_i \left(\sim \mcal{CN}(0,1)\right)$.
With all transmitters limited by (individual) average power constraints (normalized to unity), the capacity of this MIMO channel $C_{M,N}$ is given by:
\[
        C_{M,N} = \log\det\left( \mb{I} + \mb{H}\mb{H}^\dagger \right)
\]
\begin{lemma}
    \label{lemma:zero_gap_thm}
    For an $M \times N$ Gaussian MIMO channel with i.i.d inputs and capacity $C_{M,N}$, the best $K_t \times K_r$ subchannel has a capacity $C_{K_t,K_r}^*$ such that:
    \begin{align}
        C_{M,N} \leq \tfrac{\min(M,N)}{\min(K_t, K_r)}\ C_{K_t,K_r}^* + G
        \label{zero_gap}
    \end{align}
    where $G = \frac{\min(M,N)}{\min(K_t, K_r)}\log\left({M \choose K_t} {N \choose K_r}\right)$
    is a constant, independent of channel coefficients.
\end{lemma}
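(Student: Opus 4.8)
The plan is to peel the full $M\times N$ channel down to a $K_t\times K_r$ subchannel in two moves: a \emph{rank-reducing} move that costs a multiplicative factor but no additive constant, followed by an \emph{antenna-pruning} move that costs only an additive constant. Because $\det(\Id+\mb H\mb H^\dagger)=\det(\Id+\mb H^\dagger\mb H)$, the statement is symmetric under exchanging the transmit and receive sides, which swaps $(M,K_t)\leftrightarrow(N,K_r)$; hence I may assume $M\le N$, i.e.\ $\min(M,N)=M$. Set $p:=\min(K_t,K_r)$ and note $p\le K_t\le M$. I view $\mb H\in\mbb C^{N\times M}$ as having transmitters as columns and receivers as rows, and for a transmitter set $\mcal S$ I write $\mb H_{\mcal S}$ for the submatrix of $\mb H$ keeping only the columns in $\mcal S$.

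\emph{Rank reduction on the transmit side.} The engine is the determinant inequality: for every positive semidefinite $A$ of size $s$ and every $1\le p\le s$,
\begin{align}
\det(A)^{p/s}\ \le\ \max_{|\mcal S|=p}\ \det\!\big(A_{\mcal S}\big),
\end{align}
where $A_{\mcal S}$ is the principal submatrix of $A$ on the index set $\mcal S$. I would deduce this from the generalized Hadamard/Fischer inequality $\prod_{|\mcal S|=p}\det(A_{\mcal S})\ge\det(A)^{\binom{s-1}{p-1}}$ (provable by induction on $s$, using that the Schur complement $A/A_{\mcal S}$ is nonincreasing in $\mcal S$ in the Loewner order) together with AM--GM over the $\binom sp$ principal minors. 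Applying it to $A=\Id+\mb H^\dagger\mb H$ (size $M$) with exponent $p/M$, and noting that the $\mcal S$-principal block of $\Id+\mb H^\dagger\mb H$ is $\Id+\mb H_{\mcal S}^\dagger\mb H_{\mcal S}$, I obtain a transmitter set $\mcal S_0$ with $|\mcal S_0|=p$ such that $\check{\mb H}:=\mb H_{\mcal S_0}\in\mbb C^{N\times p}$ satisfies $C_{M,N}\le\frac Mp\log\det(\Id+\check{\mb H}\check{\mb H}^\dagger)$.

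\emph{Pruning the receive side and bookkeeping.} Since $\rank(\check{\mb H})\le p\le K_r$, Cauchy--Binet expands $\det(\Id+\check{\mb H}\check{\mb H}^\dagger)$ as $\sum_{j\le p}\sum_{U,V}|\det\check{\mb H}_{U,V}|^2$, the sum running over $j$-element receiver sets $U$ and $j$-element transmitter sets $V$, with $\check{\mb H}_{U,V}$ the corresponding $j\times j$ block. For a fixed receiver set $T$ with $|T|=K_r$, the same expansion applied to the $K_r$-receiver restriction $\check{\mb H}_T$ of $\check{\mb H}$ keeps precisely the terms with $U\subseteq T$; summing over all $\binom N{K_r}$ such $T$, each term with $|U|=j\le K_r$ is recounted $\binom{N-j}{K_r-j}\ge 1$ times, so $\det(\Id+\check{\mb H}\check{\mb H}^\dagger)\le\binom N{K_r}\cdot\max_{T}\det(\Id+\check{\mb H}_T\check{\mb H}_T^\dagger)$. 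The maximizing block is the subchannel of $\mb H$ on transmitters $\mcal S_0$ and receivers $T^\star$; if $p<K_t$, enlarge $\mcal S_0$ to any $K_t$-element transmitter set, which only increases $\det(\Id+(\cdot)(\cdot)^\dagger)$ since each appended transmitter adds a positive semidefinite rank-one term. Hence $\log\det(\Id+\check{\mb H}_{T^\star}\check{\mb H}_{T^\star}^\dagger)\le C_{K_t,K_r}^{*}$, and chaining the two moves gives $C_{M,N}\le\frac Mp C_{K_t,K_r}^{*}+\frac Mp\log\binom N{K_r}$. Since $\frac Mp=\frac{\min(M,N)}{\min(K_t,K_r)}$ and $\binom N{K_r}\le\binom M{K_t}\binom N{K_r}$, this is the asserted bound with the stated $G$.

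\emph{Main obstacle.} The two Cauchy--Binet expansions are bookkeeping; the real content is the rank-reduction inequality with the \emph{sharp} exponent $p/s$, i.e.\ the generalized Hadamard/Fischer estimate. A secondary pitfall, easy to get wrong, is that the rank-reducing move must be applied to the \emph{smaller} of the two dimensions (transmit, since $M\le N$) and carried all the way down to $p=\min(K_t,K_r)$, not merely to $K_t$: performing it on the larger side, or stopping at $K_t$ and then pruning receivers below $K_t$, would leave a multiplicative factor like $\frac{MN}{K_tK_r}$ or $\frac N{K_r}$ rather than $\frac{\min(M,N)}{\min(K_t,K_r)}$.
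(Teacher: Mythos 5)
Your proof is correct, but it follows a genuinely different route from the paper's. The paper's argument (given explicitly only for the case $K_t=M$, $K_r=k$, with the general case deferred to an external note) imports the antenna-selection result of Jiang \emph{et al.}, which bounds the product of the top $k$ eigenvalues of a selected submatrix against $\prod_{i=1}^{k}\lambda_i$ of the full matrix at a multiplicative cost $G_v=\prod_{i=1}^k (M-i+1)(N-i+1)$, and then uses the eigenvalue-ordering step $\prod_{i=1}^{k}\lambda_i\ge\det(\mathbf A)^{k/N}$ to convert this into the $\frac{N}{k}$-type bound. You instead split the reduction into a transmit-side step via Sz\'asz's generalized Hadamard inequality ($\prod_{|\mcal S|=p}\det(A_{\mcal S})\ge\det(A)^{\binom{s-1}{p-1}}$ plus the geometric mean over the $\binom{s}{p}$ minors, giving $\det(A)^{p/s}\le\max_{\mcal S}\det(A_{\mcal S})$ with \emph{no} additive loss) and a receive-side step via a Cauchy--Binet overcounting argument costing only $\log\binom{N}{K_r}$. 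This buys you three things: the argument is self-contained for general $K_t,K_r$ rather than a special case plus a citation; the constant you obtain, $\frac{\min(M,N)}{\min(K_t,K_r)}\log\binom{N}{K_r}$, is strictly smaller than the stated $G$; and your observation about symmetry under transposition and about carrying the reduction down to $p=\min(K_t,K_r)$ on the smaller dimension is exactly the right bookkeeping to recover the claimed prefactor. The only soft spot is that you assert Sz\'asz's inequality with a one-line induction sketch rather than a full proof; since it is a classical result (and the paper likewise leans on an external theorem for its key input), this is an acceptable level of rigor, but you should cite it explicitly rather than re-derive it informally.
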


\begin{proof}
    The proof proceeds by relating the determinants of principal submatrices of a Hermitian matrix to that of the entire matrix. The details (among others related) can be found in \cite{arxiv_antenna_selection_note}. We can also relax the bound on the antenna selection algorithm proposed by Jiang \textit{et al.} in \cite[Theorem 3.1]{varanasi} to a channel independent bound that leads to a similar conclusion. Here we give an explanation of this relaxation for the case $K_t = M, K_r = k$.

    Without loss of generality, assume $N = \min(M,N)$. Let $\lambda_1 \geq \lambda_2 \geq \dots \geq \lambda_N$, be the eigenvalues of the Hermitian matrix $\mb{A} = \mb{I} + \mb{H}\mb{H}^\dagger$, where $\mb{H} \in \mbb{C}^{N\times M}$.
The result in \cite{varanasi} proves that we can select a $k \times M$ submatrix $\breve{\mb{H}}$ from $\mb{H}$ such that the $k \times k$ Hermitian matrix $\breve{\mb{A}} = \mb{I} + \breve{\mb{H}} \breve{\mb{H}}^\dagger$ has eigenvalues $\breve{\lambda}_i$ that satisfy
\[
    \prod_{i=1}^{k} \breve{\lambda}_i \geq \prod_{i=1}^{k} \lambda_i . \prod_{i=1}^{k} \frac{1}{(M-i+1)(N-i+1)}
\]
We will refer to the second product as $\frac{1}{G_v}$. Since an algorithm (to select $\breve{\mb{H}}$ from $\mb{H}$) can at best be optimal, the submatrix $\widehat{\mb{A}}$ with the largest determinant (among the ones obtained from all possible $k \times M$ submatrices of $\mb{H}$) satisfies:
\begin{align}
    \prod_{i=1}^{k} \hat{\lambda}_i &\geq  \prod_{i=1}^{k} \breve{\lambda}_i  \geq \prod_{i=1}^{k} \lambda_i . \frac{1}{G_v}\label{eq:MIMO_Varanasi_ineq_submatrix}
\end{align}
From the assumed ordering of the eigenvalues of the ($N \times N$) matrix $\mb{A}$, we have:
\begin{equation}
\begin{aligned}
    \prod_{i=1}^{k} \lambda_i  &= \prod_{i=1}^{k} \lambda_i^\frac{N-k}{N} \prod_{i=1}^{k} \lambda_i^\frac{k}{N} \geq  \lambda_k^\frac{k(N-k)}{N} \prod_{i=1}^{k} \lambda_i^\frac{k}{N}\\
    &\geq  \lambda_{k+1}^\frac{k(N-k)}{N} \prod_{i=1}^{k} \lambda_i^\frac{k}{N} \geq  \prod_{i=k+1}^{N}\lambda_{i}^\frac{k}{N} \prod_{i=1}^{k} \lambda_i^\frac{k}{N}= \prod_{i=1}^{N} \lambda_i^\frac{k}{N}\\
\end{aligned}
\label{eq:MIMO_Varanasi_eig_ordering}
\end{equation}
Using \eqref{eq:MIMO_Varanasi_eig_ordering} in \eqref{eq:MIMO_Varanasi_ineq_submatrix} and taking logarithm of both sides, we can conclude that:
\[
    \log\det(\mb{I} + \widehat{\mb{H}}\widehat{\mb{H}}^\dagger) \geq \frac{k}{N} \log\det(\mb{I} + \mb{H}\mb{H}^\dagger) - \log(G_v)
\]
By reordering, we have:
\[
    C_{M,N} \leq \frac{N}{k} C_{M,k} + \frac{N}{k} \log(G_v)
\]
\end{proof}

\begin{lemma}
    Consider an $M \times N$ Gaussian MIMO channel with independent inputs and capacity $C$. Let $C_\mcal{A}$ be the capacity of the subchannel where only a subset of the inputs $X_\mcal{A}$ are active.
    If we denote by $\mcal{T}$, the set of transmitters of this MIMO channel, then for any subset $\mcal{A}$ of the transmitters, we have
    \begin{align}
        C \leq C_{\mcal{A}}  + C_{\mcal{A}^c}
    \end{align}
    where $\mcal{A}^c = \mcal{T}\backslash \mcal{A}$ is the complement of $\mcal{A}$ in $\mcal{T}$. The same relation follows if we partition the receivers instead.
    \label{lemma:mimo_add}
\end{lemma}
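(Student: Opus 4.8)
The plan is to recognize each $\log\det$ expression as a Gaussian mutual information and then split it with the chain rule. Let $X \sim \mcal{CN}(\mb{0},\mb{I})$ be the i.i.d.\ input, write $X = (X_\mcal{A}, X_{\mcal{A}^c})$ with the two blocks independent, and let $Y = \mb{H}X + W = \mb{H}_\mcal{A}X_\mcal{A} + \mb{H}_{\mcal{A}^c}X_{\mcal{A}^c} + W$, where $\mb{H}_\mcal{A}$ and $\mb{H}_{\mcal{A}^c}$ are the column blocks of $\mb{H}$ indexed by $\mcal{A}$ and $\mcal{A}^c$. Then $C = I(X;Y)$, and the chain rule gives
\[
    C = I(X_\mcal{A};Y) + I(X_{\mcal{A}^c}; Y \mid X_\mcal{A}).
\]
I will bound the first term by $C_\mcal{A}$ and show the second term equals $C_{\mcal{A}^c}$.

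For the conditional term: given $X_\mcal{A}$, the vector $\mb{H}_\mcal{A}X_\mcal{A}$ is a known shift, and $(X_{\mcal{A}^c},W)$ is independent of $X_\mcal{A}$, so
\[
    I(X_{\mcal{A}^c}; Y \mid X_\mcal{A}) = I\!\left(X_{\mcal{A}^c};\, \mb{H}_{\mcal{A}^c}X_{\mcal{A}^c} + W\right) = C_{\mcal{A}^c},
\]
which is exactly the capacity of the subchannel driven only by $X_{\mcal{A}^c}$. For the first term, $Y = \mb{H}_\mcal{A}X_\mcal{A} + Z$ with $Z = \mb{H}_{\mcal{A}^c}X_{\mcal{A}^c} + W$ an independent Gaussian of covariance $\mb{K}_Z = \mb{I} + \mb{H}_{\mcal{A}^c}\mb{H}_{\mcal{A}^c}^\dagger \succeq \mb{I}$. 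Computing the Gaussian mutual information and using $\det(\mb{I}+\mb{U}\mb{V}) = \det(\mb{I}+\mb{V}\mb{U})$,
\[
    I(X_\mcal{A};Y) = \log\det\!\left(\mb{I} + \mb{H}_\mcal{A}^\dagger \mb{K}_Z^{-1}\mb{H}_\mcal{A}\right) \leq \log\det\!\left(\mb{I} + \mb{H}_\mcal{A}^\dagger \mb{H}_\mcal{A}\right) = C_\mcal{A},
\]
where the inequality holds because $\mb{K}_Z \succeq \mb{I}$ forces $\mb{H}_\mcal{A}^\dagger\mb{K}_Z^{-1}\mb{H}_\mcal{A} \preceq \mb{H}_\mcal{A}^\dagger\mb{H}_\mcal{A}$ and $\log\det(\mb{I}+\cdot)$ is monotone on the positive semidefinite cone. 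Adding the two estimates yields $C \leq C_\mcal{A} + C_{\mcal{A}^c}$; stripped of probabilistic language this is precisely the subadditivity statement $\det(\mb{I}+\mb{A}+\mb{B}) \leq \det(\mb{I}+\mb{A})\det(\mb{I}+\mb{B})$ for positive semidefinite $\mb{A} = \mb{H}_\mcal{A}\mb{H}_\mcal{A}^\dagger$ and $\mb{B} = \mb{H}_{\mcal{A}^c}\mb{H}_{\mcal{A}^c}^\dagger$, which one could alternatively prove directly by a singular-value domination argument.

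For the partition of the receivers, I would apply the same argument to the reciprocal channel with matrix $\mb{H}^\dagger$: partitioning the receivers of $\mb{H}$ is the same as partitioning the transmitters of $\mb{H}^\dagger$, the overall capacity is unchanged since $\det(\mb{I}+\mb{H}\mb{H}^\dagger) = \det(\mb{I}+\mb{H}^\dagger\mb{H})$, and the subchannel capacities correspond in the same way. The only step that needs a moment's thought is the ``less noise cannot decrease capacity'' inequality bounding $I(X_\mcal{A};Y)$; everything else is bookkeeping with the chain rule and the identity $\det(\mb{I}+\mb{U}\mb{V}) = \det(\mb{I}+\mb{V}\mb{U})$, so I do not expect a genuine obstacle here.
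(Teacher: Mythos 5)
Your proof is correct, but it takes a genuinely different route from the paper. The paper proves the lemma by invoking submodularity of the symmetric mutual-information set function $f(\Lambda) = I(\Lambda;\Omega\backslash\Lambda)$ on $\Omega=\{X,Y\}$: evaluating $f$ at $\Lambda_1=\{X_{\mcal{A}}\}$ and $\Lambda_2=\{X_{\mcal{A}^c}\}$ gives $I(X_{\mcal{A}};Y_{\mcal{A}})+I(X_{\mcal{A}^c};Y_{\mcal{A}^c})\geq I(X;Y)$ for an \emph{arbitrary} input distribution, and the statement follows by maximizing both sides. You instead fix the i.i.d.\ Gaussian input that defines $C$ in this paper, apply the chain rule $I(X;Y)=I(X_{\mcal{A}};Y)+I(X_{\mcal{A}^c};Y\mid X_{\mcal{A}})$, evaluate the conditional term exactly as $C_{\mcal{A}^c}$, and bound the interference-limited term $\log\det\bigl(\mb{I}+\mb{H}_{\mcal{A}}^\dagger\mb{K}_Z^{-1}\mb{H}_{\mcal{A}}\bigr)$ by $C_{\mcal{A}}$ using $\mb{K}_Z\succeq\mb{I}$; the quadratic-form comparison $\mb{H}_{\mcal{A}}^\dagger\mb{K}_Z^{-1}\mb{H}_{\mcal{A}}\preceq\mb{H}_{\mcal{A}}^\dagger\mb{H}_{\mcal{A}}$ and the monotonicity of $\log\det(\mb{I}+\cdot)$ are both valid, so the argument is sound, and your handling of the receiver partition via the reciprocal channel $\mb{H}^\dagger$ is also fine. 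What each approach buys: the paper's argument is distribution-agnostic, manifestly symmetric in $\mcal{A}$ and $\mcal{A}^c$, and one line long once submodularity is cited as a black box; yours is fully self-contained and elementary, exposes the purely linear-algebraic content of the lemma, namely $\det(\mb{I}+\mb{A}+\mb{B})\leq\det(\mb{I}+\mb{A})\det(\mb{I}+\mb{B})$ for positive semidefinite $\mb{A},\mb{B}$, and additionally identifies the exact slack in the inequality as the gap between $C_{\mcal{A}}$ and the capacity of the $\mcal{A}$-subchannel with $X_{\mcal{A}^c}$ treated as noise.
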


\begin{proof}
    Let $X_\mcal{A}$ be a subset of the input vector $X$ that refers to the inputs from transmitters in $\mcal{A} \subseteq \mcal{T}$. We define $Y_\mcal{A}$ as:
    \[
            Y_\mcal{A} = \mb{H}_\mcal{A} X_\mcal{A} + W
    \]
    where $\mb{H}_\mcal{A}$ is the channel submatrix constructed by keeping only columns corresponding to $\mcal{A}$.
    We define $Y_{\mcal{A}^c}$ analogously.

    To prove the lemma, we will make use of the submodularity property of symmetric mutual information.
    Let $\Omega = \{X,Y\}$ be the union of the input and output variables of the MIMO channel.
    For $\Lambda \subseteq \Omega$, we can define a submodular function \cite{submodular_functions}:
    \[
        f(\Lambda) = I(\Lambda ; \Omega\backslash\Lambda)
    \]
    Let $\Lambda_1 = \{X_\mcal{A}\}$ and $\Lambda_2 = \{X_{\mcal{A}^c} \}$ be subsets of $\Omega$.
    We have:
    \[
        \begin{aligned}
            f(\Lambda_1) =&\ I(X_\mcal{A};\Omega\backslash\Lambda_1) =  I(X_\mcal{A};X_{\mcal{A}^c},Y) \\
        =&\ I(X_\mcal{A};X_{\mcal{A}^c}) + I(X_\mcal{A};Y \vert X_{\mcal{A}^c}) \\
            \stackrel{(a)}{=}&\ 0 + I(X_\mcal{A};Y_\mcal{A}) \\
        \end{aligned}
    \]
where $(a)$ follows from the independence of $X_\mcal{A}$ and $X_{\mcal{A}^c}$, and the fact that $Y_\mcal{A} = Y - \mb{H}_{\mcal{A}^c}X_{\mcal{A}^c}$.
Similarly, $f(\Lambda_2) =\ I(X_{\mcal{A}^c}\ ; Y_{\mcal{A}^c})$.
    Exploiting submodularity, we have \cite{submodular_functions}:
   \begin{equation}
       f(\Lambda_1) + f(\Lambda_2) \geq f(\Lambda_1 \cup \Lambda_2) + f(\Lambda_1 \cap \Lambda_2)
       \label{eq:submodular}
   \end{equation}
    From \eqref{eq:submodular}, we have:
    \[
        \begin{aligned}
            I(X_\mcal{A};Y_\mcal{A}) + I(X_{\mcal{A}^c}\ ;Y_{\mcal{A}^c}) &\geq I(X_\mcal{A} , X_{\mcal{A}^c}\ ;Y) + I(\phi;X,Y) \\
            &= I(X;Y)
        \end{aligned}
    \]
    Since this is true for any arbitrary distribution of $X$, by maximizing both sides of the inequality, we get the statement of the lemma.
    Applying the same arguments on two complementary sets of receivers also gives a similar result.\\
\end{proof}

With these lemmas at hand, we can now proceed to develop some useful upper bounds on $\widebar{C}$.
Define the functions $g_1^\mcal{A}, g_2^\mcal{A}$ and $g_3^k$ as:
\[
    \begin{aligned}
        g^\mcal{A}_1(\mb{u},\mb{v},l) &\triangleq  M(l)_{\{\mb{u}_{\mcal{A}}\}}^{\{\mb{v}\}} + M(l)_{\{\mb{u}_{\mcal{A}^c}\}}^{\{\mb{v}\}},\quad \mb{u}_\mcal{A} \subseteq \mb{u}\\
        g^\mcal{A}_2(\mb{u},\mb{v},l) &\triangleq  M(l)_{\{\mb{u}\}}^{\{\mb{v}_\mcal{A}\}} + M(l)_{\{\mb{u}\}}^{\{\mb{v}_{\mcal{A}^c}\}},\quad \mb{v}_\mcal{A} \subseteq \mb{v}\\
        g^k_3(\mb{u},\mb{v},l) &\triangleq \tfrac{\min(\vert \mb{u}\vert,\vert \mb{v} \vert)}{k} \max_{\substack{\mb{u}_k \subset \mb{u},\ \mb{v}_k \subset \mb{v},\\ \vert \mb{u}_k \vert = \vert \mb{v}_k \vert = k}} M(l)_{\{\mb{u}_k\}}^{\{\mb{v}_k\}} \\
    \end{aligned}
\]
Note that the Lemmas \ref{lemma:zero_gap_thm} and \ref{lemma:mimo_add} imply that:
\begin{subequations}
    \begin{align}
        M(l)_{\{\mb{u}\}}^{\{\mb{v}\}} &\leq  g^\mcal{A}_1(\mb{u},\mb{v},l)  \quad \forall \mb{u}_{\mcal{A}} \subseteq \mb{u} \label{eq:upper_functions_a} \\
        M(l)_{\{\mb{u}\}}^{\{\mb{v}\}} &\leq  g^\mcal{A}_2(\mb{u},\mb{v},l)  \quad \forall \mb{v}_{\mcal{A}} \subseteq \mb{v}  \label{eq:upper_functions_a2} \\
        M(l)_{\{\mb{u}\}}^{\{\mb{v}\}} &\leq  g^k_3(\mb{u},\mb{v},l) + G_k \quad \forall 1 \leq k \leq \min(\vert \mb{u} \vert, \vert \mb{v} \vert)  \label{eq:upper_functions_b}
    \end{align}
    \label{eq:upper_functions}
\end{subequations}
where $G_k = \frac{\min(\vert \mb{u}\vert,\vert \mb{v} \vert)}{k} \log\left({\vert \mb{u}\vert \choose k}{ \vert\mb{v}\vert \choose k} \right)$.
\\
\begin{lemma}
    For a layered Gaussian relay network with $L$ layers, $N$ relays per layer, define $f(\mcal{Y}) \triangleq \sum_{l=0}^{L} f_l(\mb{y}_l,\mb{y}_{l+1}^c)$, where $f_l(\mb{y}_l,\mb{y}_{l+1}^c)$ is some ordered application of $g_1^\mcal{A}(\cdot,\cdot,l)$, $g_2^\mcal{A}(\cdot,\cdot,l)$  and $g_3^k(\cdot,\cdot,l)$ ($1\leq k \leq K$) on $M(l)_{\{\mb{y}_l\}}^{\{\mb{y}_{l+1}^c\}}$. Then we have
    \begin{align*}
    \widebar{C} &\leq \min_{\mcal{Y}} \min_{f \in \mcal{F}_K}f(\mcal{Y}) + \widetilde{G} \numberthis \label{eq:lemma_upperbound} \\
    \text{where }\ &\widetilde{G} < (2N + 2N^3 (L-1))\log \left({N \choose \min\left( \frac{N}{2},\ K\right)} \right)
    \end{align*}
    and $\mcal{F}_K$ consists of all possible (valid) layer-wise compositions of $g_1^\mcal{A}, g_2^\mcal{A}$ and $g_3^k$ ($1\leq k \leq K$) in any order\footnote{The constant gap $\widetilde{G}$ is very crude and can be improved. Our purpose however, is to show that applying Lemmas \ref{lemma:zero_gap_thm} and \ref{lemma:mimo_add}, we get upper bounds that are only a constant gap away from $\widebar{C}$.}.
    \label{lemma:cutset_upperbound}
\end{lemma}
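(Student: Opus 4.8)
The plan is to show that $\widebar{C}$, which is a minimum over cuts $\mathcal{Y}$ of a sum of layer-wise MIMO terms $M(l)_{\{\mathbf{y}_l\}}^{\{\mathbf{y}_{l+1}^c\}}$, is upper bounded by any $f(\mathcal{Y})$ obtained by replacing each such MIMO term by an ordered composition of the operators $g_1^{\mathcal{A}}$, $g_2^{\mathcal{A}}$, $g_3^k$, at the cost of an additive constant. The key observation is that \eqref{eq:upper_functions} already establishes this for a \emph{single} application of one operator to a single layer term; the work is to (i) iterate this across compositions within a layer, (ii) sum across the $L+1$ layers, and (iii) bound the total accumulated constant by $\widetilde{G}$.

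First I would fix a cut $\mathcal{Y}$ and a fixed sequence $f$, and argue layer by layer. For a fixed layer $l$, $f_l$ is a finite composition of the operators applied to $M(l)_{\{\mathbf{y}_l\}}^{\{\mathbf{y}_{l+1}^c\}}$. Each application of $g_1^{\mathcal{A}}$ or $g_2^{\mathcal{A}}$ splits one MIMO term into a sum of two MIMO terms on disjoint transmitter (resp.\ receiver) subsets with \emph{no} additive loss, by \eqref{eq:upper_functions_a}–\eqref{eq:upper_functions_a2}; each application of $g_3^k$ replaces one MIMO term on node sets $\mathbf{u},\mathbf{v}$ by $\tfrac{\min(|\mathbf{u}|,|\mathbf{v}|)}{k}$ times a $k\times k$ submatrix term, at an additive cost $G_k = \tfrac{\min(|\mathbf{u}|,|\mathbf{v}|)}{k}\log\!\binom{|\mathbf{u}|}{k}\binom{|\mathbf{v}|}{k}$, by \eqref{eq:upper_functions_b}. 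Composing these, and noting that at every stage the relevant inequality $M \le (\text{operator output}) + (\text{constant})$ is preserved under taking the nonnegative-coefficient linear combinations that the $g_i$ produce (since each intermediate term is itself a MIMO capacity to which the next operator in the sequence applies), we get $M(l)_{\{\mathbf{y}_l\}}^{\{\mathbf{y}_{l+1}^c\}} \le f_l(\mathbf{y}_l,\mathbf{y}_{l+1}^c) + G^{(l)}$ for a constant $G^{(l)}$ that depends only on the sizes of node subsets and the sequence of operators, not on the channel coefficients. Summing over $l=0,\dots,L$ and then taking $\min_{\mathcal{Y}}$ of both sides (the left side being exactly $\widebar{C}$) and $\min_{f\in\mathcal{F}_K}$ (valid since the inequality holds for \emph{every} $f$) yields \eqref{eq:lemma_upperbound} with $\widetilde{G} = \max_{f}\sum_l G^{(l)}$.

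The remaining step is the crude bound on $\widetilde{G}$. Here I would count: each $g_3^k$ application on node sets of size at most $N$ contributes at most $\tfrac{N/2}{\,\min(N/2,K)\,}\cdot\text{(something)}$; more simply, I would absorb all coefficients $\tfrac{\min(|\mathbf{u}|,|\mathbf{v}|)}{k} \le N$ and each $\log$ term by $\log\binom{N}{\min(N/2,K)}$, and then bound the \emph{number} of MIMO subterms that can be generated in one layer. The $g_1^{\mathcal{A}}$ and $g_2^{\mathcal{A}}$ operators keep splitting a term into two on disjoint node subsets, so the number of leaf MIMO terms in a layer is at most $\sim N^2$ (bounded by the number of distinct (transmitter-subset, receiver-subset) pairs that can arise — crudely $N\cdot N$), and on the boundary layers ($l=0$ and $l=L$, involving $S$ or $D$) only $\sim N$; each such leaf can carry a multiplicative factor accumulated from nested $g_3^k$'s bounded by a power of $N$, giving the stated $2N + 2N^3(L-1)$ count of unit-$\log\binom{N}{\cdot}$ contributions across the whole network. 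I would not optimize this — the footnote explicitly concedes it is very loose.

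\textbf{Main obstacle.} The delicate point is verifying that the inequality $M \le f_l + \text{const}$ genuinely composes — i.e., that after one operator is applied, each resulting summand is again a bona fide MIMO capacity term $M(l)^{\{\cdot\}}_{\{\cdot\}}$ on a submatrix of $\mathbf{H}^{(l)}$, so that the next operator in the (valid) sequence is well-defined and \eqref{eq:upper_functions} applies to it. This requires tracking that $g_1$ splits on columns/transmitters and $g_2$ on rows/receivers of whatever submatrix is current, and that "valid composition" in $\mathcal{F}_K$ precisely means the sequence respects these domains; once that bookkeeping is set up, the additive constants simply add and the coefficient bounds are routine. The counting argument for $\widetilde{G}$ is the other place where care (though not depth) is needed, to justify the $N^3(L-1)$ scaling.
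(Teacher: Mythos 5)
Your proposal is correct and follows essentially the same route as the paper: fix a cut, observe that only applications of \eqref{eq:upper_functions_b} incur an additive constant while \eqref{eq:upper_functions_a}--\eqref{eq:upper_functions_a2} are lossless, verify the inequalities compose because each intermediate summand is again a MIMO capacity term, and then crudely count the number of $g_3^k$ applications per layer (at most $N^2$ for interior layers, $N$ for the boundary layers) times the worst-case per-application constant $2N\log\binom{N}{\min(K,\lfloor N/2\rfloor)}$. The paper's counting device is marginally cleaner than yours -- each application of \eqref{eq:upper_functions_b} removes at least one link from the current subchannel, which bounds the count regardless of interleaving with $g_1,g_2$ -- but this is a cosmetic difference in the same bookkeeping.
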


\begin{proof}
    Note that only applying \eqref{eq:upper_functions_b} introduces a constant term.
    Therefore, to get a handle on the largest constant arising, we can naively calculate an upper bound on how many times we can apply \eqref{eq:upper_functions_b} and then penalize by the largest possible constant for each time we use \eqref{eq:upper_functions_b}.
   Select an arbitrary cut $\mcal{Y}$.
   The capacity $Cut(\mcal{Y})$ of this cut is:
   \begin{align}
           Cut(\mcal{Y}) =  M(0)_{\{S\}}^{\{\mb{y}_1^c\}} + \sum_{l=1}^{L-1} M(l)_{\{\mb{y}_l\}}^{\{\mb{y}_{l+1}^c\}} + M(L)_{\{\mb{y}_L\}}^{\{D\}}
       \label{eq:one_cut_general_upperbound}
   \end{align}
   For any  $l$, $1\leq l \leq L-1$, each application of \eqref{eq:upper_functions_b} removes at least one link from the channel between $\mb{y}_l$ and $\mb{y}_{l+1}^c$.
   This means that we can apply \eqref{eq:upper_functions_b} at most $N^2$ times.
   For $l=0$ and $l=L$, we can apply \eqref{eq:upper_functions_b} up to $N$ times.
   We can upper bound the constant $G_k$ for each application as
   \begin{align*}
           &\frac{\min(\vert \mb{y}_l \vert, \vert \mb{y}_{l+1} \vert )}{k} \log\left({\vert \mb{y}_l\vert \choose k}\binom{\vert\mb{y}_{l+1}^c \vert}{k} \right)\\
           &\leq \tfrac{N}{k} \log \left( \binom{N}{k}^2\right) \stackrel{(a)}{\leq} 2N \log \left( \binom{N}{\min(K,\left \lfloor ^N/_2 \right\rfloor )}\right)
       \end{align*}
       where $(a)$ follows from a property of binomial coefficients, that for fixed $N$, $\max \binom{N}{i} = \binom{N}{\lfloor ^N/_2 \rfloor}$.
   For $l = 0$ and $l = L$, the same argument follows without the pre-log term (since $\vert \mb{y}_0 \vert = 1$, $\vert \mb{y}_{L+1}^c \vert = 1$). Plugging these bounds into \eqref{eq:one_cut_general_upperbound}, we get
   \[
       \begin{aligned}
           Cut(\mcal{Y}) &\leq  \min_{f \in \mcal{F}_K} \sum_{l=0}^{L} f_l(\mb{y}_l,\mb{y}_{l+1}^c, l) + \widetilde{G} \\
       \widetilde{G} &< (2N + 2N^3(L-1)) \log\left( \binom{N}{\min(K,\left \lfloor ^N/_2 \right\rfloor)}\right)
       \end{aligned}
   \]
   Since the selected cut $\mcal{Y}$ is arbitrary, \eqref{eq:lemma_upperbound} follows directly.\\
\end{proof}
\emph{Remark:} We can take a subset of $\mcal{F}_K$ by considering certain orderings while applying \eqref{eq:upper_functions_a}, \eqref{eq:upper_functions_a2} and \eqref{eq:upper_functions_b}. In such a case, using the same constant $\widetilde{G}$ as above gives a looser upper bound on $\widebar{C}$.
However, adding structure to how we apply \eqref{eq:upper_functions_a}, \eqref{eq:upper_functions_a2} and \eqref{eq:upper_functions_b} gives us a better handle on the constant $\widetilde{G}$, since the maximum incurred constant can be reduced.
We show such examples in the next Sections when we prove Theorems \ref{thm:single_path} and \ref{thm:k2_out_of_3}.

\section{Proof of Theorem \ref{thm:single_path}} \label{sec:proof_k1}
For this proof, we use Lemma \ref{lemma:cutset_upperbound} with $K=1$.
Furthermore, we restrict $\mcal{F}_1$ to contain a single function $f(\mcal{Y})$ where $f_l(\cdot,\cdot,l) = g_3^1(\cdot,\cdot,l)$.
In this case, we have
\[
    \widebar{C}  \leq \min_{\mcal{Y}} \sum_{l=0}^{L} \left[ \min(\vert \mb{y}_l\vert,\vert \mb{y}_{l+1}^c \vert)  \max_{i \in \mb{y}_l , j \in \mb{y}_{l+1}^c } R^{(l)}_{ij} \right] + \widetilde{G}
\]
Revisiting the calculation of  $\widetilde{G}$ in Lemma \ref{lemma:cutset_upperbound} tells us that instead of applying \eqref{eq:upper_functions_b} up to $N^2$ or $N$ times, we now need to apply it only once per layer and as a result, $\widetilde{G} = (2 + 2N(L-1))\log(N)$.
Throughout this section, we use $\widetilde{C}$ as
\begin{equation}
    \widetilde{C} \triangleq \min_{\mcal{Y}} \sum_{l=0}^{L} \left[ \min(\vert \mb{y}_l\vert,\vert \mb{y}_{l+1}^c \vert)  \max_{i \in \mb{y}_l , j \in \mb{y}_{l+1}^c } R^{(l)}_{ij} \right]
    \label{eq:single_path_C_tilde}
\end{equation}
Therefore, we have
\begin{align}
    \widebar{C} \leq \widetilde{C} + (2 + 2N(L-1))\log(N)
    \label{eq:tilde_to_bar_relation_k1}
\end{align}

From \eqref{eq:single_path_C_tilde}, we can prove an interesting property about the structure of $\widetilde{C}$.
\begin{property}
    Define $T(\mcal{Y})$, for a certain partition $\mcal{Y}$ as:
    \begin{equation}
        T(\mcal{Y}) \triangleq \sum_{l=0}^{L} \min(\vert {\bf y}_l\vert,\vert {\bf y}_{l+1}^c \vert)
        \label{eq:additive_terms}
    \end{equation}
    Then we have,
        \begin{align*}
        \max_{\mcal{Y}} T(\mcal{Y}) \leq
        \begin{cases}
        \frac{(L-1)N}{2} + 2, & \quad L\ \text{odd} \\
        \frac{LN}{2} +2 , & \quad L\ \text{even}
            \end{cases}
        \end{align*}
 \label{prpty:minimalcuts_N2}
\end{property}
\begin{proof}
    See Appendix \ref{appendix:T_Y_proof}.
\end{proof}
Using Property \ref{prpty:minimalcuts_N2}, we can now prove Theorem \ref{thm:single_path} by contradiction.

Define $\alpha_o$ and $\alpha_e$ as
\[
    \alpha_o \triangleq  \frac{2}{(L-1)N + 4},\quad \alpha_e \triangleq \frac{2}{LN + 2}
\]
Consider a network with odd number of relay layers $L$.
Let $\alpha_o = \frac{2}{(L-1)N + 4}$ and assume that for all subnetworks with $K=1$ relay per layer, the capacity is less than $\alpha_o \widetilde{C}$. Then, in each such subnetwork, there exists a link $(l,i,j)$ such that $R^{(l)}_{ij}< \alpha_o \widetilde{C}$.
Let $\mcal{B}$ be the set of all such links, i.e., $\mcal{B} = \{(l,i,j): R^{(l)}_{ij} < \alpha_o\widetilde{C}\}$.
Since $\mcal{B}$ collates cuts (which are singleton links for routes) from all subnetworks, $\mcal{B}$ separates the \emph{Source} from the \emph{Destination}. Hence, there exists a $\hat{\mcal{B}} \subseteq \mcal{B}$ such that $(l,i,j) \in \hat{\mcal{B}}$ represent links between $\mcal{Y}_{\hat{\mcal{B}}}$ and $\mcal{Y}_{\hat{\mcal{B}}}^c$, where $\mcal{Y}_{\hat{\mcal{B}}} = \{ S, \mb{\hat{y}}_1,\mb{\hat{y}}_2,\cdots,\mb{\hat{y}}_L \}$.
From $\mcal{Y}_{\hat{\mcal{B}}}$, we have:
    \[
        \begin{aligned}
            \widebar{C} &\stackrel{(a)}\leq \sum_{l=0}^{L} \left[ \min(\vert \mb{\hat{y}}_l\vert,\vert \mb{\hat{y}}_{l+1}^c \vert)  \max_{i \in \mb{\hat{y}}_l , j \in \mb{\hat{y}}_{l+1}^c } R_{ij} \right]\\
            &\stackrel{(b)}{<} \sum_{l=0}^{L} \left[ \min(\vert \mb{\hat{y}}_l\vert,\vert \mb{\hat{y}}_{l+1}^c \vert)\  \alpha_o \widetilde{C} \right]\\
                &= \alpha_o \widetilde{C}\ T(\mcal{Y}_{\hat{B}}) \\
                &\stackrel{(c)}{\leq} \frac{2}{(L-1)N + 4}\ \widetilde{C} \left(\frac{(L-1)N}{2}+2\right) = \widetilde{C}
        \end{aligned}
    \]
    where $(a)$ follows from \eqref{eq:single_path_C_tilde}, $(b)$ follows from the fact that links part of the cut characterized by $\mcal{Y}_{\hat{\mcal{B}}}$ have capacities strictly less than $\alpha_o \widetilde{C}$ and $(c)$ follows from Property \ref{prpty:minimalcuts_N2}.
    This results in the contradiction $\widetilde{C} < \widetilde{C}$.

    Therefore, for any network with odd number of relay layers $L$, there exists a subnetwork with one relay per layer such that:
\[
    \begin{aligned}
    C_1^* \geq& \frac{2}{(L-1)N + 4}\ \widetilde{C} \\
    \stackrel{(a)}{\geq}& \frac{2}{(L-1)N + 4}\ \widebar{C} - \frac{(4 + 4(L-1)N)\log(N)}{(L-1)N + 4}\\
    \geq& \frac{2}{(L-1)N + 4}\ \widebar{C} - 4\log(N)
    \end{aligned}
\]
where $(a)$ is implied by \eqref{eq:tilde_to_bar_relation_k1}.

Using a similar argument for even $L$, a network with all single-path subnetworks having capacity less that $\alpha_e \widetilde{C}$ will result in a cut $\mcal{Y}_{\hat{\mcal{B}}}$ such that
    \[
        \begin{aligned}
            \widebar{C} &\leq \sum_{l=0}^{L} \left[ \min(\vert \mb{\hat{y}}_l\vert,\vert \mb{\hat{y}}_{l+1}^c \vert)  \max_{i \in \mb{\hat{y}}_l , j \in \mb{\hat{y}}_{l+1}^c } R_{ij} \right]\\
            &< \sum_{l=0}^{L} \left[ \min(\vert \mb{\hat{y}}_l\vert,\vert \mb{\hat{y}}_{l+1}^c \vert)\  \alpha_e \widetilde{C} \right] = \alpha_e \widetilde{C}\ T(\mcal{Y}_{\hat{B}}) \\
            &\leq \frac{2}{(L-1)N + 4}\ \widetilde{C} \left(\frac{(L-1)N}{2}+2\right) = \widetilde{C}
        \end{aligned}
    \]
    which again yields a contradiction.
    Therefore, for even $L$,
\begin{align*}
    C_1^* \geq& \frac{2}{LN + 2}\ \widetilde{C} \\
    {\geq}& \frac{2}{LN + 2}\ \widebar{C} - \frac{4 + 4(L-1)N\log(N)}{LN + 2} \\
                    \geq& \frac{2}{LN + 2}\ \widebar{C} - 4\log(N)
\end{align*}
This completes our proof of the lower bound.

To prove that this worst case bound is tight (within a constant gap), it suffices to provide example networks where the maximum capacity of any subnetwork choosing $K=1$ relays per layer is
\begin{align*}
    C_1^* = \alpha_o\ \widebar{C} \quad \text{($L$ odd)}\quad ,\quad  C_1^* = \alpha_e\ \widebar{C} \quad \text{($L$ even)}
\end{align*}

    For odd $L$, consider the example network illustrated in Fig. \ref{fig:converse_odd_layers_relays} for $L=5$ layers of relays.
        The general construction for arbitrary odd $L$ for this network is:
    \[
        \begin{aligned}
            &\ R^{(0)}_{Si} = R^{(L)}_{ND}= \widebar{C}\ \ \quad\quad\quad\quad\quad\quad \forall 1 \leq i \leq N-1 \\
            &\ R^{(0)}_{SN}= R^{(L)}_{1D} = \frac{2}{(L-1)N+4}\ \widebar{C}\\
            &\ R^{(L)}_{iD} = 0\ \ \quad\quad\quad\quad\quad\quad\quad\quad\quad\quad \forall 2 \leq i \leq N-1 \\
            l\ \text{odd} & \quad (l \neq 0, L): \\
            &\ R^{(l)}_{ii} = \frac{2}{(L-1)N + 4}\ \widebar{C} \ \ \quad\quad\quad\quad \forall 1 \leq i \leq N-1 \\
            &\ R^{(l)}_{Ni} = R^{(l)}_{iN} = \widebar{C}  \quad\quad\quad\quad\quad\quad \forall 1 \leq i \leq N-1 \\
            &\ R^{(l)}_{ij} = 0 \quad\quad\quad\quad\quad\quad\quad\quad 1 \leq i,j \leq N-1,\ i \neq j \\
            l\ \text{even} & \quad (l \neq 0, L):\\
            &\ R^{(l)}_{NN} = \frac{2}{(L-1)N+4}\widebar{C} \\
            &\ R^{(l)}_{ij} = \widebar{C} \quad\quad\quad\quad\quad\quad\quad\quad\quad\quad\quad i\neq N \& j \neq N
        \end{aligned}
    \]
It is easy to see that for all cuts except the one highlighted in Fig. \ref{fig:converse_odd_layers_relays}, the capacity is greater than or equal $\widebar{C}$.
Particularly, if any node on the $Source$ side switches to the $Destination$ side (or vice versa), a link of capacity $\widebar{C}$ is added to the cut value.

Any path from $S$ to $D$ in Fig. \ref{fig:converse_odd_layers_relays} has at least one link with capacity $\alpha_o \widebar{C}$ and therefore, all single-path subnetworks have capacity of at most $\frac{2}{(L-1)N +4} \widebar{C}$.

For even $L$, we consider the network illustrated in Fig. \ref{fig:converse_even_layers_relays}, which follows the general construction:
    \[
        \begin{aligned}
            &\ R^{(0)}_{Si} = R^{(L)}_{iD} = \widebar{C}\quad\quad\quad\quad 1 \leq i \leq N-1 \\
            &\ R^{(0)}_{SN} = R^{(L)}_{ND} = \dfrac{2}{LN+2}\widebar{C}\\
            l\ \text{odd} & \quad (l \neq 0, L):\\
            &\ R^{(l)}_{ii} = \dfrac{2}{LN+2}\widebar{C} \quad \quad\quad 1 \leq i \leq N-1 \\
            &\ R^{(l)}_{ij} = 0 \quad\quad\quad\quad \quad\quad 1 \leq i,j \leq N-1,\ i \neq j \\
            l\ \text{even} & \quad (l \neq 0, L):\\
            &\ R^{(l)}_{NN} = \dfrac{2}{LN+2}\widebar{C} \\
            &\ R^{(l)}_{ij} = \widebar{C} \quad \quad\quad\quad\quad\quad i\neq N \& j \neq N, \\
        \end{aligned}
    \]
    Similar to the case for odd $L$, the highlighted cut is the minimum cut, since it avoids all links with capacity $\widebar{C}$.
    Since the Figure illustrates a cut, all paths from $S$ to $D$ include at least one link belonging to the highlighted cut. Therefore any path from $S$ to $D$ has a capacity of at most $\frac{2}{LN+2} \widebar{C}$.

    This concludes our proof of Theorem \ref{thm:single_path}.
    \begin{figure}
        \centering
        \includegraphics[width=0.45\textwidth]{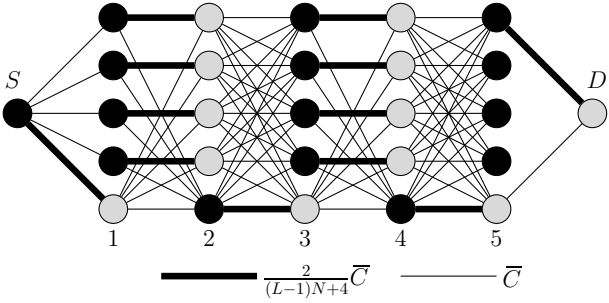}
        \caption{Example network with $N=5$ relays per layer and odd $L=5$ relay layers. Dark nodes represent nodes on the $Source$ side of the cut.}
        \label{fig:converse_odd_layers_relays}
    \end{figure}

    \begin{figure}
        \centering
        \includegraphics[width=0.47\textwidth]{network_even.tikz}
        \caption{Example network with $N=5$ relays per layer and even $L=6$ relay layers. Dark nodes represent nodes on the $Source$ side of the cut.}
        \label{fig:converse_even_layers_relays}
    \end{figure}

\section{ Proof of Theorem \ref{thm:k2_out_of_3}} \label{sec:proof_k2_N3_L2}

    \begin{figure}
        \centering
        \includegraphics[width=0.3\textwidth]{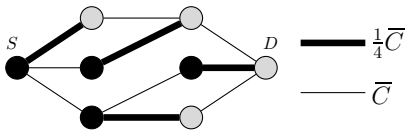}
        \caption{Example network with $L=2$ relay layers and $N=3$ relay per layer. Dark nodes are on the Source side of the cut.}
        \label{fig:converse_L2_N3}
    \end{figure}

In this section, we consider a layered relay network with $N=3$ and $L=2$. First, we provide an example network in Fig.~\ref{fig:converse_L2_N3}, with a minimum cut $\widebar{C}$ (highlighted in dark), in which it is easy to see that for every subnetwork comprising two relays per layer, the subnetwork capacity is at most $\frac{1}{2}\widebar{C}$.

Towards the lower bound in Theorem \ref{thm:k2_out_of_3}, we prove that by enforcing some structure on $\mcal{F}_K$, we can reduce the result of Lemma \ref{lemma:cutset_upperbound} to
\begin{align} \label{eq:bar_to_f_K2_L2}
    \widebar{C} \leq \min_{\mcal{Y}} \min_{f \in \mcal{F}_\mcal{Y}} f(\mcal{Y}) + 3\log(3)
\end{align}
where $\mcal{F}_\mcal{Y}$ considers only certain ways of applying \eqref{eq:upper_functions}, depending on the structure of the cut $\mcal{Y}$.
We list the different realizations of $\mcal{F}_{\mcal{Y}}$ in Table \ref{tab:class_to_f_matching}. The calculation of $f(\mcal{Y})$ and the constant $G_{\mcal{Y}}$ is the topic of Appendix \ref{appendix:gaps}.

To prove \eqref{eq:bar_to_f_K2_L2}, consider the following:
\begin{align*}
\widebar{C} &= \min_{\mcal{Y}} Cut(\mcal{Y}) \\
&\leq \min_{\mcal{Y}} \left\{ \min_{f \in \mcal{F}_{\mcal{Y}}}f(\mcal{Y}) + G_{\mcal{Y}} \right\}\\
&\leq \min_{\mcal{Y}} \left\{ \min_{f \in \mcal{F}_{\mcal{Y}}}f(\mcal{Y}) \right\} + \max_{\mcal{Y}}\ G_{\mcal{Y}}\\
&\stackrel{(a)}{=} \min_{\mcal{Y}} \min_{f \in \mcal{F}_{\mcal{Y}}} f(\mcal{Y}) + 3\log(3)
\end{align*}
where $(a)$ follows from the constants $G_{\mcal{Y}}$ in Table \ref{tab:class_to_f_matching}.

    \begin{table*}
        \caption{Classes of cuts, their upper bounding functions and the constants incurred (Summary of discussion in Appendix \ref{appendix:gaps}).\\The cuts $\mcal{Y}$ and $\mcal{Y}^r$ are reflections of one another and therefore have similar cut structure (reflected).}
        \label{tab:class_to_f_matching}
        \centering
        \bgroup
        \def\arraystretch{1.6}
        \begin{tabular}{|l|l|l|l|}
            \hline
            \multicolumn{2}{|c|}{$\mcal{Y} (\mcal{Y}^r)$} & \multicolumn{1}{c|}{$\displaystyle \mcal{F}_{\mcal{Y}}$ ( $\displaystyle\mcal{F}_{\mcal{Y}^r}$ are reflected accordingly)} & $G_{\mcal{Y}}$\\
            \hline\hline
            (1) & \begin{tabular}{l}$\mcal{Y} = \{S,\phi,\phi\}$\\$\mcal{Y}^r = \{S,\{1,2,3\},\{1,2,3\}\}$\end{tabular} & $ \displaystyle f(\mcal{Y}) = \max_{i \in \{1,2,3\}} R^{(0)}_{Si}$ & $\log(3)$ \\ \hline
            (2) & \begin{tabular}{l}$\mcal{Y} = \{S,\{3\},\phi\}$\\$\mcal{Y}^r = \{S,\{1,2,3\},\{1,2\}\}$\end{tabular} & $ \displaystyle   f(\mcal{Y}) = \max_{i \in \{1,2\}} R^{(0)}_{Si} + \max_{i \in \{1,2,3\}} R^{(1)}_{3i}$ & $\log(2)$ \\ \hline
            (3) & \begin{tabular}{l}$\mcal{Y} = \{S,\{3\},\{1\}\}$\\$\mcal{Y}^r = \{S,\{2,3\},\{1,2\}\}$\end{tabular} & $ \displaystyle   f(\mcal{Y}) = \max_{i \in \{1,2\}} R^{(0)}_{Si} + \max_{i \in \{2,3\}} R^{(1)}_{3i} + R^{(0)}_{1D} $ & $ 2\log(2)$ \\ \hline
            (4) & \begin{tabular}{l}$\mcal{Y} = \{S,\{3\},\{1,2\}\}$\end{tabular} & $ \displaystyle  f(\mcal{Y}) = \max_{i \in \{1,2\}} R^{(0)}_{Si} + R^{(1)}_{33} + \max_{i \in \{1,2\}} R^{(2)}_{iD} $ & $2\log(2)$ \\ \hline
            (5) & \begin{tabular}{l}$\mcal{Y} = \{S,\{2,3\},\phi \}$\\$\mcal{Y}^r = \{S,\{1,2,3\},\{1\}\}$\end{tabular} &  \begin{tabular}{l} $\displaystyle  f(\mcal{Y}) = R^{(0)}_{S1} + \hspace{-0.05in}\max_{i \in \{2,3\}}\hspace{-0.05in}R^{(1)}_{ip} + M(1)_{\{2,3\}}^{\{1,2,3\}\backslash\{p\}}\ ,\quad\quad p \in \{1,2,3\} $ \\ $ \displaystyle  f(\mcal{Y}) = R^{(0)}_{S1} + \max_{\substack{\mb{v} \subset \{1,2,3\},\\ \vert \mb{v} \vert = 2} }M(1)_{\{2,3\}}^{\{\mb{v}\}}$\quad,\quad $  \displaystyle  f(\mcal{Y}) = R^{(0)}_{S1} + \hspace{-0.05in}\max_{i \in \{1,2,3\}}\hspace{-0.05in}R^{(1)}_{2i} + \hspace{-0.05in}\max_{i \in \{1,2,3\}} \hspace{-0.05in}R^{(1)}_{3i}$\quad  \end{tabular} & $2\log(3)$\\ \hline
            (6) & \begin{tabular}{l}$ \mcal{Y} = \{S,\{2,3\},\{1\} \}$\end{tabular} & \begin{tabular}{l}$\displaystyle f(\mcal{Y}) = R^{(0)}_{S1} +  \max_{i \in \{2,3\}} R^{(1)}_{2i} + \max_{i \in \{2,3\}} R^{(1)}_{3i} + R^{(2)}_{1D}$\ ,\\ $\displaystyle f(\mcal{Y}) =  R^{(0)}_{S1} + \max_{i \in \{2,3\}} R^{(1)}_{i2} + \max_{i \in \{2,3\}} R^{(1)}_{i3} + R^{(2)}_{1D}$\ ,\\ $\displaystyle f(\mcal{Y}) = R^{(0)}_{S1} + M(1)_{\{2,3\}}^{\{2,3\}} + R^{(2)}_{1D}$ \end{tabular} & $2\log(2)$ \\ \hline
            (7) & \begin{tabular}{l}$\mcal{Y} = \{S,\{1,2,3\},\phi \}$\end{tabular} & 
            \begin{tabular}{l}$\displaystyle f(\mcal{Y}) = \max_{i \in \{1,2,3\}} R^{(1)}_{pi} + \max_{\substack{\mb{v} \subset \{1,2,3\},\\ \vert \mb{v} \vert = 2}}M(1)_{\{1,2,3 \}\backslash\{p\}}^{\{\mb{v}\}}\ ,\quad\quad p \in \{1,2,3\}$ \\ $\displaystyle f(\mcal{Y}) = \frac{3}{2} \max_{ \substack{\mb{u} \subset \{1,2,3\},\\ \mb{v} \subset \{1,2,3\},\\ \vert \mb{u} \vert = \vert \mb{v} \vert = 2}}\hspace{-0.05in} M(1)_{\{\mb{u}\}}^{\{\mb{v}\}}$ \end{tabular} & $3\log(3)$ \\ \hline
            (8) & \begin{tabular}{c}$Otherwise$\end{tabular} & $\displaystyle f(\mcal{Y}) = M(0)_{\{S\}}^{\{\mb{y}_1^c\}} + M(1)_{\{\mb{y}_1\}}^{\{\mb{y}_2^c\}} + M(2)_{\{\mb{y}_2\}}^{\{D\}}$ & $\displaystyle zero$ \\[3pt] \hline

        \end{tabular}
        \egroup
    \end{table*}

For the remainder of this section, we define
\begin{align} \label{eq:k2_C_tilde}
    \widetilde{C} \triangleq \min_{\mcal{Y}} \min_{f \in \mcal{F}_\mcal{Y}} f(\mcal{Y})
\end{align}
and as a result
\begin{align}
    \label{eq:tilde_to_bar_relation}
    \widebar{C} \leq \widetilde{C} + 3\log(3)
\end{align}
To prove Theorem \ref{thm:k2_out_of_3}, we are going to argue by contradiction that for any network with $L=2$ and $N=3$, there always exists a subnetwork with two relays per layer such that $\widebar{C}_2$ for this subnetwork is greater than $\frac{1}{2}\widetilde{C}$.
Once this is established, the statement of the theorem follows directly as:
\begin{align*}
    C_2^* &\geq \frac{1}{2}\widetilde{C} \geq \frac{1}{2}\widebar{C} - 1.5\log(3)
\end{align*}

In the remainder of this proof, by a slight abuse of notation, we refer to a subnetwork comprising relays $(1,i_1)$, $(1,j_1)$ in the first layer and $(2,i_2), (2,j_2)$ in the second layer by the vector tuple $[\{i_1,j_1\},\{i_2,j_2\}]$.

We start by assuming that for an arbitrary network with $L=2$ and $N=3$, all its subnetworks with two relays per layer have $\widebar{C}_2$ less than $\frac{1}{2}\widetilde{C}$, i.e., every such subnetwork has at least one (if not more) cut(s) with cut-value less than $\frac{1}{2} \widetilde{C}$.
We term such cuts of the subnetworks as \textit{critical cuts}.
Let $\Lambda$ be the union of the links forming the critical cuts.
Also, since each link $(l,i,j) \in \Lambda$ is part of at least one critical cut, then $ \forall (l,i,j) \in \Lambda$, we have $(l,i,j) < \frac{1}{2}\widetilde{C}$.

We proceed by categorizing $\Lambda$ into classes, depending on how many links $(0,S,j)$ and $(2,i,D) \in \Lambda$.
Let $z_0 = \vert\{j : (0,S,j) \in \Lambda \}\vert$ and $z_2 = \vert \{ i : (2,i,D) \in \Lambda \} \vert$.
We therefore, need to address the following cases (the others follow from symmetry):\\
    (1) $z_0 = 0$ and $z_2 = 0$,\quad (2) $z_0 = 1$ and $z_2 = 0$\\
    (3) $z_0 = 1$ and $z_2 = 1$,\quad (4) $z_0 = 1$ and $z_2 = 2$\\
    (5) $z_0 = 2$ and $z_2 = 0$,\quad (6) $z_0 = 2$ and $z_2 = 2$ \\
    (7) $z_0 = 3$ or $z_2 = 3$

    Before we go through the proof for these cases, it is of benefit to discuss some simple implications which we use extensively throughout the proof.
\begin{enumerate}[1.]
    \item Assume that the link $(0,S,i) \not\in \Lambda$. Then for any subnetwork $[\{i,x\},\{y,z\}]$, the critical cut cannot contain a term of the form $M(0)_{\{S\}}^{\{i,\theta\}}$ where $\theta = x$ or $\theta = \phi$ (empty set). In other words, critical cuts of such subnetworks always consider relay $(1,i)$ to be on the source side (i.e., a transmitter). Similarly, if $(2,i,D) \not\in \Lambda$ then all critical cuts of subnetworks of the form $[\{x,y\},\{i,z\}]$ consider relay $(2,i)$ on the destination side (i.e., a receiver).\\

    \item Assume that a subnetwork has a critical cut capacity of the form: $X + M(l)_{\{\mb{u}\}}^{\{\mb{v}\}} < \frac{1}{2}\widetilde{C}$,
        where $X$ is an arbitrary term representing other contributions to the critical cut capacity ($X$ can be zero).
        Then, by the fact that the capacity of a MIMO channel is lower bounded by the capacity of any of its subchannels (subset of transmitters and/or receivers), we have:
        \begin{align}
            & & X + \max_{j \in \{\mb{v}\} } R^{(l)}_{tj} < \frac{1}{2}\widetilde{C}\quad \forall t \in \{\mb{u}\}    \nonumber \\
            &\text{and}&\  X + \max_{i \in \{\mb{u}\}} R^{(l)}_{it} < \frac{1}{2}\widetilde{C}\quad \forall t \in \{\mb{v}\}
            \label{eq:mimo_to_link}
        \end{align}
         While constructing our contradictions, we will always use these implications directly whenever we have a SIMO or MISO expression as part of the critical cut capacity.
        When we have a MIMO expression as the critical cut, we will selectively use \eqref{eq:mimo_to_link} and note it accordingly.
\end{enumerate}
We can now proceed to prove each of the cases listed.
\subsection{\fbox{$z_0 = 0$ and $z_2 = 0$}}
In this case $(0,S,i) \not\in \Lambda$ and $(2,i,D) \not\in \Lambda$, $\forall i \in \{1,2,3\}$. This means that for all subnetworks, only MIMO cuts are critical cuts.
Now consider the full network cut characterized by $\mcal{Y} = \{ S, \{1,2,3\}, \phi\}$.
We have the following contradiction:
            \begin{align*}
                \widetilde{C} \stackrel{(a)}\leq \min_{f \in \mcal{F}_{\mcal{Y}}} f(\mcal{Y}) \stackrel{(b)}{\leq}&\ \ \frac{3}{2} \max_{\substack{ \mb{u},\mb{v} \subseteq \{1,2,3\},\\ \vert \mb{u} \vert = \vert \mb{v} \vert = 2}} M(1)^{\mb{\{v\}}}_{\mb{\{u\}}}\\
                                                       <&\ \ \frac{3}{2} \times \frac{1}{2}\widetilde{C} = \frac{3}{4}\widetilde{C}
                                                   \end{align*}
where $(a)$ follows from \eqref{eq:k2_C_tilde} and $(b)$ follows from row 7 in Table \ref{tab:class_to_f_matching}.
\subsection{\fbox{$z_0 = 1$ and $ z_2 = 0$ }}
            Without loss of generality, we can assume that the link $(0,S,1) \in \Lambda$.
            Consider the subnetworks constructed by selecting relays $[ \{2,3\},\{s_1,s_2\}]$ where $s_1,s_2 \in \{1,2,3\}$ and $s_1 \neq s_2$.
            Since $z_0 = 1$, we know that $(0,S,i) \not\in \Lambda$ for $i \in \{2,3\}$.
            Additionally since $z_2 = 0$, this implies that for all aforementioned subnetworks, the critical cuts are only of the form:
            \[
                Cut(\mcal{Z}_1) = M(1)^{\{s_1,s_2\}}_{\{2,3\}} < \frac{1}{2} \widetilde{C}
            \]
            Since this is true for all three subnetworks characterized by $s_1,s_2$ as mentioned above, it implies that:
            \[
                \max_{\substack{s_1,s_2 \in \{1,2,3\},\\s_1 \neq s_2}} M(1)^{\{s_1,s_2\}}_{\{2,3\}} < \frac{1}{2}\widetilde{C}
            \]
            Now from the full network cut characterized by $\mcal{Y} = \{S, \{2,3\},\phi\}$, we have:
            \begin{align*}
                \widetilde{C} \stackrel{(a)}{\leq} \min_{f \in \mcal{F}_{\mcal{Y}}} f(\mcal{Y}) &\stackrel{(b)}{\leq} R_{S1}^{(0)} + \max_{\substack{s_1,s_2 \in \{1,2,3\},\\s_1 \neq s_2}} M(1)^{\{s_1,s_2\}}_{\{2,3\}} \\
                        &< \frac{1}{2}\widetilde{C} + \frac{1}{2}\widetilde{C} = \widetilde{C}
            \end{align*}
            which gives a contradiction. Note that $(a)$ follows from \eqref{eq:k2_C_tilde} and $(b)$ from row 5 in Table \ref{tab:class_to_f_matching}. A similar argument follows for the case when $z_0 = 0$ and $z_2 = 1$.
            \\
            \subsection{\fbox{$z_0 = 1$ and $z_2 = 1$} }
            Without loss of generality, we can assume that the links $(0,S,1),(2,1,D) \in \Lambda$.  Consider the subnetwork constructed by selecting the relays $[\{2,3\},\{2,3\}]$.
            For this subnetwork, the critical cut can only be the MIMO cut. i.e.
            \begin{equation}
                M(1)^{\{2,3\}}_{\{2,3\}} < \frac{1}{2} \widetilde{C}
                \label{eq:case_1_1_odd_sub}
            \end{equation}
            If in addition, we have:
            \[
                R^{(0)}_{S1} + R^{(2)}_{1D} < \frac{1}{2} \widetilde{C}
            \]
            then we get a contradiction, since for the cut characterized by $\mcal{Y} = \{ S,\{2,3\},\{1\}\}$, we have:
            \[
                \begin{aligned}
                    \widetilde{C} \leq \min_{f \in \mcal{F}_{\mcal{Y}}} f(\mcal{Y}) \stackrel{(a)}{\leq}& R^{(0)}_{S1} + M(1)_{\{2,3\}}^{\{2,3\}} + R^{(2)}_{1D}  \\
                    <& \frac{1}{2} \widetilde{C} + \frac{1}{2} \widetilde{C} = \widetilde{C}
                \end{aligned}
            \]
            where $(a)$ follows from row 6 in Table \ref{tab:class_to_f_matching}.
            As a result, from here onwards, we assume that:
            \begin{equation}
                R^{(0)}_{S1} + R^{(2)}_{1D} \geq \frac{1}{2} \widetilde{C}
                \label{eq:less_than_05}
            \end{equation}
            Consider the subnetworks formed by relays $[\{1,s_1\},\{2,3\}]$ where $s_1 \in \{2,3\}$.
            For these subnetworks parameterized by $s_1$, the candidate critical cuts can only be one of these types:
            \begin{equation}
                \begin{aligned}
                            \text{\textbf{Type I(a)}}  \\
                            Cut(\mcal{Z}_1)_{(s_1)} =& R^{(0)}_{S1} + \max_{i\in\{2,3\}} R^{(1)}_{s_1 i} < \frac{1}{2} \widetilde{C}\\
                            \text{\textbf{Type II(a)}}\\
                            Cut(\mcal{W}_1)_{(s_1)} =& M(1)_{\{1,s_1\}}^{\{2,3\}} < \frac{1}{2} \widetilde{C}\\
                \end{aligned}
                \label{eq:case_1_1_net1}
            \end{equation}
            Also consider the subnetworks $[\{2,3\},\{1,s_2\}]$ where $s_2 \in \{2,3\}$.
            The candidate cuts for these subnetworks can only be of the following categories:
            \begin{equation}
                \begin{aligned}
                            \text{\textbf{Type I(b)}}  \\
                            Cut(\mcal{Z}_2)_{(s_2)} =& \max_{i\in\{2,3\}} R^{(1)}_{i s_2} + R^{(2)}_{1D}  < \frac{1}{2} \widetilde{C}\\
                            \text{\textbf{Type II(b)}}\\
                            Cut(\mcal{W}_2)_{(s_2)} =& M(1)_{\{2,3\}}^{\{1,s_2\}}  < \frac{1}{2} \widetilde{C}\\
                \end{aligned}
                \label{eq:case_1_1_net2}
            \end{equation}
            Of the four subnetworks described above (parameterized by $s_1$ and $s_2$), if three or more subnetworks have critical cuts of Type I, then at least two are of Type I(a), else two are of Type I(b).
            From \eqref{eq:case_1_1_net1} and \eqref{eq:case_1_1_net2}, this implies that either:
            \[
                \begin{aligned}
                R^{(0)}_{S1} + \max_{i,j \in \{2,3\}} R^{(1)}_{ij} &< \frac{1}{2} \widetilde{C}, \\
                \max_{i \in \{2,3\}} R^{(1)}_{i s_2} + R^{(2)}_{1D} &< \frac{1}{2} \widetilde{C}, \text{ for some } s_2 \in \{2,3\} \\
                \text{\bf or}&    \\
                \max_{i,j \in \{2,3\}} R^{(1)}_{ij} + R^{(2)}_{1D} &< \frac{1}{2} \widetilde{C}, \\
                R^{(0)}_{S1} + \max_{i \in \{2,3\}} R^{(1)}_{s_1 i} &< \frac{1}{2} \widetilde{C}, \text{ for some } s_1 \in \{2,3\} \\
                \end{aligned}
            \]
            Let $\hat{s}_k \in \{2,3\}, \hat{s}_k \neq s_k$ for $k \in \{1,2\}$.
            Now considering the cut $\mcal{Y} = \{S,\{2,3\},\{1\} \}$, we arrive at a contradiction as follows:
            \begin{align*}
                    \widetilde{C} &\leq \min_{f \in \mcal{F}_{\mcal{Y}}} f(\mcal{Y})\\
                                  &\stackrel{(a)}{\leq} R^{(0)}_{S1} + \max_{i \in \{2,3\}} R^{(1)}_{i \hat{s}_2} + \max_{i \in \{2,3\}} R^{(1)}_{i s_2} + R^{(2)}_{D1} \\
                                  &\leq \underbracket{R^{(0)}_{S1} + \max_{i,j \in \{2,3\}} R^{(1)}_{ij}} + \underbracket{\max_{i \in \{2,3\}} R^{(1)}_{i s_2} + R^{(2)}_{D1}} \\
                                  &< \frac{1}{2} \widetilde{C} + \frac{1}{2} \widetilde{C} = \widetilde{C} \\
                & \text{\bf or}   \\
                    \widetilde{C} &\leq \min_{f \in \mcal{F}_{\mcal{Y}}} f(\mcal{Y})\\
                                  &\stackrel{(b)}{\leq} R^{(0)}_{S1} + \max_{i \in \{2,3\}} R^{(1)}_{s_1 i} + \max_{i \in \{2,3\}} R^{(1)}_{\hat{s}_1 i} + R^{(2)}_{D1} \\
                                  &\leq \underbracket{R^{(0)}_{S1} +\max_{i \in \{2,3\}} R^{(1)}_{s_1 i}} + \underbracket{\max_{i,j \in \{2,3\}} R^{(1)}_{ij} + R^{(2)}_{D1}} \\
                                  &< \frac{1}{2} \widetilde{C} + \frac{1}{2} \widetilde{C} = \widetilde{C}
            \end{align*}
            where $(a)$ and $(b)$ follow from row 6 in Table \ref{tab:class_to_f_matching}.

            Similarly, if three or more of the subnetworks (parameterized by $s_1$ and $s_2$) have Type II critical cuts, then at least two of them are of Type II(a) or else we have two of Type II(b).
            Either way, this implies that:
            \begin{equation}
                \begin{aligned}
                    \max_{i \in \{2,3\}} M(1)_{\{1,i\}}^{\{2,3\}} < \frac{1}{2}\widetilde{C}\\
                    \text{or}\ \max_{i \in \{2,3\}} M(1)_{\{2,3\}}^{\{1,i\}} < \frac{1}{2} \widetilde{C}
                \end{aligned}
                \label{eq:case_1_1_MIMOs}
            \end{equation}
            Now consider the full network cuts $\mcal{Y}_1 = \{S,\{2,3\},\phi \}$ and $\mcal{Y}_2 = \{S, \{1,2,3\},\{1\}\}$.
            One of these cuts gives us a contradiction as follows:
            \[
                \begin{aligned}
                    \widetilde{C} &\leq \min_{f \in \mcal{F}_{\mcal{Y}_1}} f(\mcal{Y}_1)\\
                    &\stackrel{(a)}{\leq} R^{(0)}_{S1} + \max_{\substack{i,j \in \{1,2,3\},\\ i\neq j}} M(1)^{\{i,j\}}_{\{2,3\}} \\
                                  &\stackrel{(b)}{<} \frac{1}{2} \widetilde{C} + \frac{1}{2} \widetilde{C} = \widetilde{C} \\
                & \text{\bf or}   \\
                    \widetilde{C} &\leq \min_{f \in \mcal{F}_{\mcal{Y}_2}} f(\mcal{Y}_2)\\
                    &\stackrel{(a)}{\leq} \max_{\substack{i,j \in \{1,2,3\},\\ i\neq j}} M(1)_{\{i,j\}}^{\{2,3\}} + R^{(2)}_{1D} \\
                                  &\stackrel{(b)}{<} \frac{1}{2} \widetilde{C} + \frac{1}{2} \widetilde{C} = \widetilde{C} \\
                \end{aligned}
            \]
            where $(a)$ in both occurrences\footnote{In the case of $\mcal{Y}_2$, the expressions in the Table need to be accordingly modified to address a ``$3 \times 2$ MIMO + link'' instead of ``$2 \times 3$ MIMO + link''.} follow from row 5 in Table \ref{tab:class_to_f_matching} and $(b)$ in both cases is implied by \eqref{eq:less_than_05} and \eqref{eq:case_1_1_MIMOs}.

            Note that the argument just mentioned is valid even if only two subnetworks have Type II critical cuts as long as they are both Type II(a) or Type II(b).
            Therefore, the only scenario to reconcile with, is when for the four subnetworks parameterized by $s_1$ and $s_2$, we have one critical cut from each Type: I(a), I(b), II(a) and II(b).

            Again let $\hat{s}_k \in \{2,3\}, \hat{s}_k \neq s_k$ for $k \in \{1,2\}$.
            The remaining scenario can be represented by any combination of $s_1,s_2,\hat{s}_1,\hat{s}_2$, and we have:
            \begin{subequations}
                \begin{align}
                    R^{(0)}_{S1} + \max_{i \in \{2,3\}} R^{(1)}_{s_1 i} <& \frac{1}{2}\widetilde{C} \label{eq:case_1_1_final_1}\\
                    M(1)^{\{2,3\}}_{\{1,\hat{s}_1\}} <& \frac{1}{2} \widetilde{C} \label{eq:case_1_1_final_2}\\
                    \max_{i \in \{2,3\}} R^{(1)}_{i s_2} + R^{(2)}_{1D} <& \frac{1}{2} \widetilde{C}  \label{eq:case_1_1_final_3}\\
                    M(1)_{\{2,3\}}^{\{1,\hat{s}_2\}} <& \frac{1}{2} \widetilde{C} \label{eq:case_1_1_final_4}
                \end{align}
                \label{eq:case_1_1_final_situation}
            \end{subequations}
            From \eqref{eq:case_1_1_final_situation}, we can conclude the following:
            \begin{equation}
                \begin{aligned}
                \max_{i \in \{1,2,3\}} R^{(1)}_{ki} &< \frac{1}{2}\widetilde{C},\quad \forall k \in \{2,3\} \\
                \max_{i \in \{1,2,3\}} R^{(1)}_{ik} &< \frac{1}{2}\widetilde{C},\quad \forall k \in \{2,3\}
                \end{aligned}
                \label{eq:case_1_1_final_situation_conc}
            \end{equation}
            We cannot directly argue a contradiction using the relations in  \eqref{eq:case_1_1_final_situation} and \eqref{eq:case_1_1_final_situation_conc}.
            Therefore, we consider the remaining subnetworks constructed from relays $[\{1,u_1\},\{1,u_2\}]$, where $u_1,u_2 \in \{2,3\}$.
            For this subnetwork parameterized by $u_1,u_2$, there are four types of candidate critical cuts.
            One possible critical cut is
            \[
                Cut(\mcal{T}_3)_{(u_1,u_2)} = R^{(0)}_{S1} + R^{(1)}_{u_1 u_2} + R^{(2)}_{1D} \\
            \]
            However since starting \eqref{eq:less_than_05}, we assume that
            \[
                R^{(0)}_{S1} +  R^{(2)}_{1D} \geq \frac{1}{2} \widetilde{C}
            \]
            then $Cut(\mcal{T}_3)$ cannot be a critical cut.
            The three remaining candidate critical cuts are:
            \begin{equation}
                \begin{aligned}
                    \text{\textbf{Type I(c)}}  \\
                            Cut(\mcal{V}_3)_{(u_1,u_2)} =& R^{(0)}_{S1} + \max_{i\in\{1,u_2\}} R^{(1)}_{u_1 i} <\frac{1}{2} \widetilde{C} \\
                            \text{\textbf{Type II(c)}}\\
                            Cut(\mcal{Z}_3)_{(u_1,u_2)} =& \max_{i\in\{1,u_1\}} R^{(1)}_{i u_2} + R^{(2)}_{1D}  <\frac{1}{2} \widetilde{C} \\
                            \text{\textbf{Type III(c)}}\\
                            Cut(\mcal{W}_3)_{(u_2,u_2)} =& M(1)^{\{1,u_2\}}_{\{1,u_1\}}  <\frac{1}{2} \widetilde{C} \\
                \end{aligned}
                \label{eq:case_1_1_rem_networks}
            \end{equation}
            Now we are going to prove that if any of the subnetworks characterized by $u_1, u_2$ have a critical cut of Type I(c), then we get a contradiction.
            For that, we have one of the following scenarios:
            \begin{enumerate}[(1) ]
                \item \underline{$u_1 = s_1$ and $u_2 \in \{2,3\}$}:\\
                        In this case, we have from \eqref{eq:case_1_1_final_1} and \eqref{eq:case_1_1_rem_networks}:
                       \begin{align*}
                            & R^{(0)}_{S1} + \max_{i \in \{1,2,3\}} R^{(1)}_{s_1 i}\\
                            &= R^{(0)}_{S1} + \max\left\{  \max_{i \in \{1,u_2\}} R^{(1)}_{s_1 i},\ \max_{i \in \{2,3\}} R^{(1)}_{s_1 i} \right\} \\
                            &< \frac{1}{2} \widetilde{C} \numberthis
                            \label{eq:case_1_1_u1_eq_x1}
                        \end{align*}
                        Now considering the full network cut $\mcal{Y} = \{S,\{2,3\},\phi\}$, we have:
                        \[
                            \begin{aligned}
                                \widetilde{C} \leq \min_{f \in \mcal{F}_{\mcal{Y}}} f(\mcal{Y}) &\stackrel{(a)}{\leq} R^{(0)}_{S1} +\hspace{-0.05in} \max_{ i \in \{1,2,3\}} R^{(1)}_{2i} + \hspace{-0.05in}\max_{ i \in \{1,2,3\}} R^{(1)}_{3i} \\
                                &=  \underbracket{R^{(0)}_{S1} +\hspace{-0.05in} \max_{ i \in \{1,2,3\}} R^{(1)}_{s_1 i}} + \hspace{-0.05in}\max_{ i \in \{1,2,3\}} R^{(1)}_{\hat{s}_1 i} \\
                                                                 &\stackrel{(b)}{<} \frac{1}{2}\widetilde{C} + \frac{1}{2}\widetilde{C} = \widetilde{C}
                            \end{aligned}
                        \]
                        which is a contradiction. The relation $(a)$ uses row 5 in Table \ref{tab:class_to_f_matching} and $(b)$ follows from \eqref{eq:case_1_1_final_situation_conc} and \eqref{eq:case_1_1_u1_eq_x1}.
                        \\
                    \item \underline{$u_1 = \hat{s}_1$ and $u_2 = s_2$}:\\
                        In this scenario, we have:
                            \begin{align*}
                                &R^{(0)}_{S1} + \max_{i \in \{2,3\}} R^{(1)}_{i s_2} \\
                                &\leq R^{(0)}_{S1} + \max \left\{ \max_{i \in \{1,s_2\}} R^{(1)}_{\hat{s}_1 i}\ , R^{(1)}_{s_1 s_2} \right\} \\
                                &\leq R^{(0)}_{S1} + \max \left\{ \max_{i \in \{1,s_2\}} R^{(1)}_{\hat{s}_1 i}\ , \max_{i \in \{2,3\}} R^{(1)}_{s_1 i} \right\} \\
                                &= R^{(0)}_{S1} + \max \left\{ \max_{i \in \{1,u_2\}} R^{(1)}_{u_1 i}\ , \max_{i \in \{2,3\}} R^{(1)}_{s_1 i} \right\} \\
                                &\stackrel{(a)}{<} \frac{1}{2}\widetilde{C} \numberthis
                            \label{eq:case_1_1_u1_neq_x1_u2_eq_x2}
                        \end{align*}
                    where $(a)$ follows from \eqref{eq:case_1_1_final_1} and \eqref{eq:case_1_1_rem_networks}.
                        Considering the full network cut $\mcal{Y} = \{S, \{2,3\},\phi\}$, we are faced with a contradiction:
                        \[
                            \begin{aligned}
                                \widetilde{C} \leq \min_{f \in \mcal{F}_{\mcal{Y}}} f(\mcal{Y}) &\stackrel{(a)}{\leq}  \underbracket{R^{(0)}_{S1} +\hspace{-0.05in} \max_{ i \in \{2,3\}} R^{(1)}_{i s_2}} + M(1)_{\{2,3\}}^{\{1,\hat{s}_2\}} \\
                                 &\stackrel{(b)}{<} \frac{1}{2}\widetilde{C} + \frac{1}{2}\widetilde{C} = \widetilde{C}
                            \end{aligned}
                        \]
                        Relation $(a)$ uses row 5 from Table \ref{tab:class_to_f_matching} while $(b)$ follows from \eqref{eq:case_1_1_final_4} and \eqref{eq:case_1_1_u1_neq_x1_u2_eq_x2}.
                        \\
                    \item \underline{$u_1 = \hat{s}_1$ and $u_2 = \hat{s}_2$}:\\
                        In this case, we have:
                        \begin{align*}
                                &R^{(0)}_{S1} + \max_{i \in \{2,3\}} R^{(1)}_{i \hat{s}_2} \\
                                &= R^{(0)}_{S1} \hspace{-0.02in} + \hspace{-0.02in}\max\hspace{-0.02in}\left\{ R^{(1)}_{\hat{s}_1 \hat{s}_2}\ , R^{(1)}_{s_1 \hat{s}_2} \right\} \\
                                &\leq R^{(0)}_{S1} \hspace{-0.02in} + \hspace{-0.02in}\max\hspace{-0.02in}\left\{ \max_{i \in \{1,\hat{s}_2\} } R^{(1)}_{\hat{s}_1,i}\ , \max_{i \in \{2,3\}} R^{(1)}_{s_1,i} \right\} \\
                                &\stackrel{(a)}{<} \frac{1}{2}\widetilde{C} \numberthis
                            \label{eq:case_1_1_u1_neq_x1_u2_neq_x2}
                            \end{align*}
                            where $(a)$ follow from \eqref{eq:case_1_1_final_1} and \eqref{eq:case_1_1_rem_networks}.

                        For the full network cut $\mcal{Y} =\{S,\{2,3\},\{1\}\}$, we have:
                        \[
                            \begin{aligned}
                                \widetilde{C} &\leq \min_{f \in \mcal{F}_{\mcal{Y}}} f(\mcal{Y})\\
                                              &\stackrel{(a)}{\leq}  \underbracket{R^{(0)}_{S1} + \max_{ i \in \{2,3\}} R^{(1)}_{i \hat{s}_2}} + \underbracket{\max_{ i \in \{2,3\}} R^{(1)}_{i s_2} + R^{(2)}_{1D}} \\
                                              &\stackrel{(b)}{<} \frac{1}{2}\widetilde{C} + \frac{1}{2}\widetilde{C} = \widetilde{C}
                            \end{aligned}
                        \]
                        where $(a)$ follows from row 6 in Table \ref{tab:class_to_f_matching} and $(b)$ follows from \eqref{eq:case_1_1_final_3} and \eqref{eq:case_1_1_u1_neq_x1_u2_neq_x2}.
            \end{enumerate}
            If one subnetwork has a critical cut of Type II(c), similar arguments to the ones used to prove the contradictions above, can be made.
            This is due to the symmetry of the cuts in \eqref{eq:case_1_1_final_situation} and the symmetry of Type I and Type II critical cuts in \eqref{eq:case_1_1_rem_networks}.

            As a result, the only remaining scenario to consider is if in addition to \eqref{eq:case_1_1_final_situation}, all the subnetworks characterized by $u_1, u_2$ in \eqref{eq:case_1_1_rem_networks} have a critical cut of Type III(c).

            From \eqref{eq:case_1_1_rem_networks}, consider only the subnetworks where $u_1 = \hat{s}_1$, i.e. the two subnetworks with parameters $(u_1, u_2) = (\hat{s}_1,2)$ and $(u_1,u_2) = (\hat{s}_1,3)$.
            Combining the cuts from these subnetworks with \eqref{eq:case_1_1_final_4}, we can conclude that:
                \begin{align*}
                    &\max_{\substack{i,j \in \{1,2,3\},\\ i \neq j}} M(1)^{\{1,\hat{s}_2\}}_{\{i,j\}} \\
                    &= \max \left\{ M(1)^{\{1,u_2=2\}}_{\{1,\hat{s}_1\}},\ M(1)^{\{1,u_2=3\}}_{\{1,\hat{s}_1\}},\ M(1)^{\{2,3\}}_{\{1,\hat{s}_1\}} \right\} \\
                    &< \frac{1}{2} \widetilde{C} \numberthis
                \label{eq:case_1_1_rem_all_MIMO}
            \end{align*}
            From this conclusion, we can build a contradiction by considering the full network cut $\mcal{Y} = \{S,\{1,2,3\},\phi \}$:
            \[
                \begin{aligned}
                                \widetilde{C} &\leq \min_{f \in \mcal{F}_{\mcal{Y}}} f(\mcal{Y})\\
                                &\stackrel{(a)}{\leq}  \max_{i \in \{1,2,3\}} R^{(1)}_{s_1 i} + \max_{\substack{i,j \in \{1,2,3\},\\ i \neq j}} M(1)_{\{1,\hat{s}_1\}}^{\{i,j\}} \\
                                &\stackrel{(b)}{<}  \frac{1}{2}\widetilde{C} + \frac{1}{2}\widetilde{C}  = \widetilde{C}
                \end{aligned}
            \]
            Relation $(a)$ follows from row 7 in Table \ref{tab:class_to_f_matching} while $(b)$ follows from \eqref{eq:case_1_1_final_situation_conc} and \eqref{eq:case_1_1_rem_all_MIMO}.

            This concludes the proof for $z_0 = 1$ and $z_2 = 1$.
            \\

    \subsection{\fbox{$z_0 = 1$ and $z_2 = 2$}}
            Without loss of generality, assume that the links $(0,S,1), (2,i,D) \in \Lambda$ for $i=1,2$.
            Consider the subnetwork with relays $[\{2,3\},\{2,3\}]$.
            For this subnetwork, the critical cuts are one of the following:
            \[
               \begin{aligned}
                   Cut(\mcal{Z}_1) =& M(1)_{\{2,3\}}^{\{2,3\}} < \frac{1}{2}\widetilde{C} \\
                Cut(\mcal{Z}_2) =& \max_{i \in \{2,3\}} R^{(1)}_{i3} + R^{(2)}_{2D} < \frac{1}{2}\widetilde{C}
                \end{aligned}
            \]
            Other cuts in the subnetwork are not critical or else we would have $z_0 \neq 1$ or $z_2 \neq 2$.
            Both conditions above, imply that:
            \begin{equation}
                \max_{i \in \{2,3\}} R^{(1)}_{i3} < \frac{1}{2} \widetilde{C}
                \label{eq:case_1_2_odd_sub}
            \end{equation}
            Now consider the subnetworks formed by selecting relays $[(1,s_1),(s_2,3)]$, where $s_1 \in \{2,3\}$ and $s_2 \in \{1,2\}$.
            We can classify the possible critical cuts in these subnetworks into two types.
            \begin{equation}
                        \begin{aligned}
                            \text{\textbf{Type I}}  \\
                            Cut(\mcal{Z}_1)_{(s_1,s_2)} =& M(1)_{\{1,s_1\}}^{\{s_2,3\}} < \frac{1}{2}\widetilde{C}\\
                            Cut(\mcal{Z}_2)_{(s_1,s_2)} =& \max_{i\in\{1,s_1\}} R^{(1)}_{i3} + R^{(2)}_{s_2 D}  < \frac{1}{2}\widetilde{C}\\
                            \text{\textbf{Type II}}\\
                            Cut(\mcal{W}_1)_{(s_1,s_2)} =& R^{(0)}_{S1} + R^{(1)}_{s_13} + R^{(2)}_{s_2D}  < \frac{1}{2}\widetilde{C}\\
                            Cut(\mcal{W}_2)_{(s_1,s_2)} =& R^{(0)}_{S1} + \max_{i\in\{s_2,3\}} R^{(1)}_{s_1i}  < \frac{1}{2}\widetilde{C}\\
                        \end{aligned}
                \label{eq:case_1_2_w}
            \end{equation}
            Note that for the subnetwork with relays $[\{1,s_1\},\{s_2,3\}]$, any critical cuts aside from the ones listed above would imply that either $z_0 \neq 1$ or $z_2 \neq 2$.
            For any of these four subnetworks parameterized by $s_1,s_2$, a critical cut of Type I implies from \eqref{eq:case_1_2_w} that:
            \[
                R^{(1)}_{13} < \frac{1}{2} \widetilde{C}
            \]
            By considering also the implication in \eqref{eq:case_1_2_odd_sub}, we have:
            \begin{equation}
                \max_{i \in \{1,2,3\}} R^{(1)}_{i3} < \frac{1}{2} \widetilde{C}
                \label{eq:case_1_2_t1}
            \end{equation}
            A critical cut of Type II implies that:
            \begin{equation}
                R^{(0)}_{S1} + R^{(1)}_{s_2 3} < \frac{1}{2}\widetilde{C}
                \label{eq:case_1_2_t2}
            \end{equation}

            If any of the four subnetworks parameterized by $s_1,s_2$ has a Type I critical cut, then we can get a contradiction as follows: Consider the full network cut $\mcal{Y} = \{S,\{1,2,3\},\{1,2\}\}$, we have:
            \[
                \begin{aligned}
                    \widetilde{C} \leq \min_{f \in \mcal{F}_{\mcal{Y}}} f(\mcal{Y}) &\stackrel{(a)}{\leq} \max_{i \in \{1,2,3\}} R^{(1)}_{i3} + \max_{i \in \{1,2\}} R^{(2)}_{iD} \\
                                                     &\stackrel{(b)}{<} \frac{1}{2}\widetilde{C} + \frac{1}{2}\widetilde{C} = \widetilde{C}
                \end{aligned}
            \]
            where the structure in $(a)$ follows from row 2 in Table \ref{tab:class_to_f_matching}. Relation $(b)$ follows from \eqref{eq:case_1_2_t1}, and the fact (remarked earlier) that $z_0 = 2$ coupled with our assumption $(2,i,D) \in \Lambda$ for $i=1,2$ imply that $R^{(2)}_{iD} < \frac{1}{2}\widetilde{C}$ for $i \in \{1,2\}$.

            If on the other hand, none of the four subnetworks have a critical cut of Type I, then they all have critical cuts of Type II.
            Consider \eqref{eq:case_1_2_t2} for all four subnetworks. Since $s_2 \in \{2,3\}$, we have:
            \[
                R^{(0)}_{S1} + \max_{i \in \{2,3\} } R^{(1)}_{i 3} < \frac{1}{2}\widetilde{C}
            \]
            We thus have a contradiction by considering the full network cut $\mcal{Y} = \{S,\{2,3\},\{1,2\}\}$ as follows:
            \[
                \begin{aligned}
                \widetilde{C} \leq \min_{f \in \mcal{F}_{\mcal{Y}}} f(\mcal{Y}) &\stackrel{(a)}{\leq} R^{(0)}_{S1} + \max_{i \in \{2,3\}} R^{(1)}_{i3} + \max_{i \in \{1,2\}} R^{(2)}_{iD} \\
                    &< \frac{1}{2}\widetilde{C} + \frac{1}{2}\widetilde{C} = \widetilde{C}
                \end{aligned}
            \]
            where $(a)$ follows the structure of row 3 in Table \ref{tab:class_to_f_matching}.
            This concludes the proof for the case $z_0 = 1, z_2 =2$. A similar argument can be made for $z_0 =2, z_2 = 1$.
\\

\subsection{\fbox{$ z_0 = 2$ and $z_2 = 0$ }}
            Without loss of generality, assume that the links $(0,S,i) \in \Lambda$ for $i \in \{1,2\}$, i.e.
            \begin{equation}
                \max_{i \in \{1,2\}} R^{(0)}_{Si} < \frac{1}{2}\widetilde{C}
                \label{eq:case_2_0_first_eq}
            \end{equation}
            Consider the subnetworks formed by selecting relays $[\{1,3\},\{s_1,s_2\}]$ where $s_1,s_2 \in \{1,2,3\}$ and $\ s_1 \neq s_2$.
            Since the links $(2,D,i) \not\in \Lambda$ for $i \in \{1,2,3\}$, then for these subnetworks (parameterized by $s_1, s_2$), we only have two candidates for critical cuts:
            \[
                \begin{aligned}
                    Cut(\mcal{Z})_{(s_1,s_2)} &= R^{(0)}_{S1} + \max_{i \in \{s_1,s_2\}} R^{(1)}_{3i} < \frac{1}{2} \widetilde{C} \\
                    Cut(\mcal{W})_{(s_1,s_2)} &= M_{\{1,3\}}^{\{s_1,s_2\}} < \frac{1}{2} \widetilde{C}
                \end{aligned}
            \]
            If either cut is critical, we arrive at the same conclusion:
            \begin{equation}
                \max_{i \in \{s_1,s_2\} } R^{(1)}_{3i} < \frac{1}{2} \widetilde{C}
                \label{eq:case_2_0_second_eq}
            \end{equation}
            By combining \eqref{eq:case_2_0_second_eq} above, for the subnetworks $(s_1, s_2) = (1,2)$ and $(s_1,s_2) = (1,3)$, we get:
            \[
                \max_{i \in \{1,2,3\} } R^{(1)}_{3i} < \frac{1}{2} \widetilde{C}
            \]
            Now considering the full network cut $\mcal{Y} = \{S,\{3\},\phi\}$, the following contradiction arises:
            \[
                \begin{aligned}
                    \widetilde{C} \leq \min_{f \in \mcal{F}_{\mcal{Y}}} f(\mcal{Y}) &\stackrel{(a)}{\leq} \max_{i \in \{1,2\}} R^{(0)}_{Si} + \max_{i \in \{1,2,3\} } R^{(1)}_{3i}\\
                    &\stackrel{(b)}{<} \frac{1}{2}\widetilde{C} + \frac{1}{2}\widetilde{C} = \widetilde{C}
                \end{aligned}
            \]
            where $(a)$ uses row 2 from Table \ref{tab:class_to_f_matching} while $(b)$ follows from \eqref{eq:case_2_0_first_eq} and \eqref{eq:case_2_0_second_eq}.

            This concludes the proof for the case $z_0 = 2$ and $z_2 = 0$.
            A similar argument follows for the case $z_0 = 0$ and $z_2 = 2$.

            \subsection{\fbox{$ z_0 = 2$ and $z_2 = 2$ }}
            Without loss of generality, assume that the links $(0,S,i),(2,i,D) \in \Lambda$ for $i \in \{1,2\}$,
            i.e., we have:
            \begin{equation}
                \max_{i \in \{1,2\}} R^{(0)}_{Si} < \frac{1}{2}\widetilde{C}\quad \text{and}\quad \max_{i \in \{1,2\}}R^{(2)}_{iD} < \frac{1}{2}\widetilde{C}
                \label{eq:case_2_2_assump}
            \end{equation}
            Consider the subnetworks constructed by selecting relays $[ \{s_1,3\},\{s_2,3\} ]$ for $s_1,s_2 \in \{1,2\}$.
            The candidate critical cuts for such subnetwork can be classified into three types as:
            \begin{equation}
                        \begin{aligned}
                            \text{\textbf{Type I}}\\
                            Cut(\mcal{V}_1)_{(s_1,s_2)} =& M(1)_{\{s_1,3\}}^{\{s_2,3\}} < \frac{1}{2}\widetilde{C} \\
                            \text{\textbf{Type II}}  \\
                            Cut(\mcal{Z}_1)_{(s_1,s_2)} =& R^{(0)}_{S s_1} + R^{(1)}_{33} + R^{(2)}_{s_2 D} < \frac{1}{2} \widetilde{C}\\
                            \text{\textbf{Type III}}\\
                            Cut(\mcal{W}_1)_{(s_1,s_2)} =& R^{(0)}_{S s_1} + \max_{i \in \{s_2,3\}} R^{(1)}_{3i} < \frac{1}{2} \widetilde{C} \\
                            Cut(\mcal{W}_2)_{(s_1,s_2)} =& \max_{i \in \{s_1,3\}} R^{(1)}_{i3} + R^{(2)}_{s_2 D} < \frac{1}{2} \widetilde{C} \\
                        \end{aligned}
                        \label{eq:case_2_2_w}
            \end{equation}
            If a network parameterized by $s_1, s_2$ has a Type I critical cut, then we have the following implications:
            \begin{equation}
                \begin{aligned}
                    \max_{i \in \{s_2,3\}} R^{(1)}_{3i} < \frac{1}{2} \widetilde{C}\ &\text{\textbf{and}}\ \max_{i \in \{s_1,3\}} R^{(1)}_{i3} < \frac{1}{2} \widetilde{C} \\
                \end{aligned}
            \label{eq:case_2_2_t1}
            \end{equation}
            If a subnetwork has a critical cut of Type II, this implies that:
            \begin{equation}
                \begin{aligned}
                    R^{(0)}_{S s_1} + R^{(1)}_{33} < \frac{1}{2}\widetilde{C}\quad ,&\quad  R^{(1)}_{33} + R^{(2)}_{s_2 D} < \frac{1}{2}\widetilde{C}\ ,\\
                    R^{(0)}_{S s_1} + R^{(2)}_{s_2 D} &< \frac{1}{2}\widetilde{C}
                \end{aligned}
                \label{eq:case_2_2_t2}
            \end{equation}

            Finally, a critical cut of Type III implies that either:
            \begin{equation}
                \max_{i \in \{s_2,3\}} R^{(1)}_{3i} < \frac{1}{2} \widetilde{C}\quad \text{\textbf{and/or}} \max_{i \in \{s_1,3\}} R^{(1)}_{i3} < \frac{1}{2} \widetilde{C}
                \label{eq:case_2_2_t3}
            \end{equation}
            Let $\hat{s}_k \in \{2,3\}$ such that $s_k \neq \hat{s}_k$ for $k \in \{1,2\}$.
            If a subnetwork with parameters $s_1, s_2$ has a critical cut of Type I, consider also the subnetwork with parameters $\hat{s}_1$ and $\hat{s}_2$.
            Since the second subnetwork can have critical cuts from any of the three types, then combining the conclusion from both subnetworks, we have from \eqref{eq:case_2_2_t1}, \eqref{eq:case_2_2_t2} and \eqref{eq:case_2_2_t3}, the following cases:
            \[
                \begin{aligned}
                    \text{(1):}\ & \max_{i \in \{1,2,3\}} R^{(1)}_{3i} < \frac{1}{2}\widetilde{C} \\
                    \text{(2):}\ & M(1)_{\{s_1,3\}}^{\{s_2,3\}} < \frac{1}{2}\widetilde{C} \\
                    &\text{\textbf{and}}\ R^{(0)}_{S \hat{s}_1} + R^{(2)}_{\hat{s}_2 D} < \frac{1}{2}\widetilde{C} \\
                    \text{(3):}\ & \max_{i \in \{1,2,3\}} R^{(1)}_{3i} < \frac{1}{2}\widetilde{C}\\
                    &\text{\textbf{or}} \max_{i \in \{1,2,3\}} R^{(1)}_{i3} < \frac{1}{2}\widetilde{C} \\
                \end{aligned}
            \]
            In the three cases, we get a contradiction from one of the full network cuts.
            In the first case, consider the cut characterized by $\mcal{Y}_1 = \{S,\{3\},\phi\}$. From this cut we get:
            \[
                \begin{aligned}
                    \widetilde{C} \leq \min_{f \in \mcal{F}_{\mcal{Y}}} f(\mcal{Y}) \stackrel{(a)}{\leq}& \max_{i \in \{1,2\} }R^{(0)}_{Si} + \max_{i \in \{1,2,3\}} R^{(1)}_{3i} \\
                                                     <&\ \frac{1}{2}\widetilde{C} + \frac{1}{2} \widetilde{C} = \widetilde{C}
                \end{aligned}
            \]
            which is a contradiction. Relation $(a)$ follows from row 2 in Table \ref{tab:class_to_f_matching}. In the second case, consider the cut $\mcal{Y}_2 = \{S,\{s_1,3\},\{\hat{s}_2\}\}$ depending on the choices of $s_1$ and $\hat{s}_2$.
            From this we get:
            \[
                \begin{aligned}
                    \widetilde{C} \leq \min_{f \in \mcal{F}_{\mcal{Y}}} f(\mcal{Y}) \stackrel{(b)}{\leq}& R^{(0)}_{S \hat{s}_1} + M(1)_{\{s_1,3\}}^{\{s_2,3\}} + R^{(2)}_{\hat{s}_2 D} \\
                                                      <& \frac{1}{2} \widetilde{C} + \frac{1}{2} \widetilde{C}
                \end{aligned}
            \]
            which is a contradiction. $(b)$ uses row 6 in Table \ref{tab:class_to_f_matching}. Finally for the third case, we get the contradiction either from the full network cut $\mcal{Y}_1$ again or from $\mcal{Y}_3 = \{S, \{1,2,3\},\{1,2\}\}$.
            The cut $\mcal{Y}_3$ gives us:
            \[
                \begin{aligned}
                    \widetilde{C}  \leq \min_{f \in \mcal{F}_{\mcal{Y}}} f(\mcal{Y}) \stackrel{(c)}{\leq}& \max_{i \in \{1,2,3\}} R^{(1)}_{i3} + \max_{i \in \{1,2\} }R^{(2)}_{iD} \\
                                                     <&\ \frac{1}{2}\widetilde{C} + \frac{1}{2} \widetilde{C} = \widetilde{C}
                \end{aligned}
            \]
            which is a contradiction, where $(c)$ is implied by row 2 in Table \ref{tab:class_to_f_matching}.

            We now turn our attention to subnetworks with Type II and Type III critical cuts.
            Since two subnetworks would differ at least in one of the two parameters $s_1, s_2$, it follows that having two or more subnetworks with critical cuts of Type II implies from \eqref{eq:case_2_2_t2} that either:
            \begin{equation}
                \begin{aligned}
                    &\max_{i \in \{1,2\} }R^{(0)}_{Si} + R^{(1)}_{33} &< \frac{1}{2}\widetilde{C}\\
                    &\text{\textbf{or}}& \\
                &R^{(1)}_{33} + \max_{i \in \{1,2\}}R^{(2)}_{iD} &< \frac{1}{2}\widetilde{C}
                \end{aligned}
                \label{eq:case_2_2_t1_conc}
            \end{equation}
            By considering the cut $\mcal{Y} = \{ S,\{3\},\{1,2\}\}$, we have the following contradiction:
            \[
                \begin{aligned}
                    \widetilde{C}  \leq \min_{f \in \mcal{F}_{\mcal{Y}}} f(\mcal{Y}) \stackrel{(a)}{\leq}& \max_{i \in \{1,2\} }R^{(0)}_{Si} + R^{(1)}_{33} + \max_{i \in \{1,2\} }R^{(0)}_{iD} \\
                                                     \stackrel{(b)}{<}&\ \frac{1}{2}\widetilde{C} + \frac{1}{2} \widetilde{C} = \widetilde{C}
                \end{aligned}
            \]
            where $(a)$ follows from row 4 in Table \ref{tab:class_to_f_matching} and $(b)$ follows by applying \eqref{eq:case_2_2_t1_conc} and \eqref{eq:case_2_2_assump}.
            On the other hand, having three or more networks with Type III critical cuts implies (by the Pigeon-Hole Principle) that two of them are of the form $\mcal{W}_1$ or else two are of the form $\mcal{W}_2$.
            Without loss of generality, assume it is $\mcal{W}_1$.
            This implies that either:
            \begin{equation}
                \begin{aligned}
                    &\max_{i \in \{1,2\} }R^{(0)}_{Si} + R^{(1)}_{33} &< \frac{1}{2}\widetilde{C}\\
                    &\text{\textbf{or}}& \\
                    &\max_{i \in \{1,2,3\}} R^{(1)}_{3i}  &< \frac{1}{2}\widetilde{C}
                \end{aligned}
                \label{eq:case_2_2_t2_conc}
            \end{equation}
            Now consider the full network cuts $\mcal{Y}_1 = \{S, \{3\},\{1,2\}\}$ and $\mcal{Y}_2 = \{S,\{3\},\phi\}$. One of these cuts will give a contradiction as follows:
            \[
                \begin{aligned}
                    \widetilde{C}  \leq \min_{f \in \mcal{F}_{\mcal{Y}}} f(\mcal{Y}) \stackrel{(a)}{\leq}& \max_{i \in \{1,2\} }R^{(0)}_{Si} + R^{(1)}_{33} + \max_{i \in \{1,2\} }R^{(0)}_{iD} \\
                    \stackrel{(b)}{<}&\ \frac{1}{2}\widetilde{C} + \frac{1}{2} \widetilde{C} = \widetilde{C}\\
                    &\text{\textbf{or}}& \\
                    \widetilde{C}  \leq \min_{f \in \mcal{F}_{\mcal{Y}}} f(\mcal{Y}) \stackrel{(c)}{\leq}& \max_{i \in \{1,2\} }R^{(0)}_{Si} + \max_{i \in \{1,2,3\}} R^{(1)}_{3i} \\
                                                     <&\ \frac{1}{2}\widetilde{C} + \frac{1}{2} \widetilde{C} = \widetilde{C}
                \end{aligned}
            \]
            where $(a),(c)$ follow from rows 4 and 2 respectively, in Table \ref{tab:class_to_f_matching} while $(b),(d)$ follow from \eqref{eq:case_2_2_assump} and \eqref{eq:case_2_2_t2_conc}.
            If less than two subnetworks have Type II critical cuts and less than three subnetworks have Type III critical cuts, then at least one subnetwork has a critical cut of Type I, for which we showed a contradiction earlier.
            This concludes the contradiction for the case $z_0 =2 , z_2 =2$.

            \subsection{\fbox{$ z_0 = 3$ or $z_2 = 3$}}
            Assume $z_0 = 3$. In this case, we have:
                    \[
                        R^{(0)}_{Si} < \frac{1}{2} \widetilde{C}\quad,\quad i \in \{1,2,3\}
                    \]
                    Therefore, the cut characterized by $\mcal{Y} = \{ S, \phi , \phi \}$ gives a contradiction as follows:
                    \[
                        \begin{aligned}
                            \widetilde{C} \leq \min_{\mcal{Y} \in \mcal{F}_{\mcal{Y}}} f(\mcal{Y}) \leq& \max_{i \in \{1,2,3\}} R^{(0)}_{Si} \\
                                                             <& \frac{1}{2} \widetilde{C}
                        \end{aligned}
                    \]
                    A similar argument follows for the case $ z_2 = 3 $.

\noindent This concludes the proof of Theorem \ref{thm:k2_out_of_3}.


\begin{appendices}
    \section{Proof of Property \ref{prpty:minimalcuts_N2}} \label{appendix:T_Y_proof}
    Define $s_l \triangleq \vert \mb{y}_l \vert \ \forall\ 1 \leq l \leq L$ (it follows that $\vert \mb{y}_l^c \vert = N - s_l $).
    Since $\mcal{M}_0 = \{S\}$, it follows that $s_0 = \vert \mb{y}_0 \vert = 1$. Similarly, since $\mcal{M}_{L+1} = \{D\}$, and $D \not\in \mcal{Y}$, $\vert \mb{y}_{L+1}^c \vert = 1$
Since \eqref{eq:additive_terms} only depends on the cardinalities of $\mb{y}_l$, we can rewrite it in terms of $s_l$ as:
                \begin{align*}
                    \max_\mcal{Y}\ T(\mcal{Y})\  =\  & \max \sum_{i=0}^{L-1} \min(s_i,N-s_{i+1}) + \min(s_L,1)\\
                                           s.t. &\quad 0 \leq s_i \leq N \quad \forall 1 \leq i \leq L, \\
                                                &\quad s_0 = 1,\ s_i \in \mbb{Z}
                \end{align*}
                Let $t_i = \min(s_i,N-s_{i+1}) \forall i \in \{1,2,\dots L+1\}$,
                $t_0 = \min(1,N-s_l)$ and $t_L = \min(s_L,1)$.
                The problem then becomes
                \begin{align*}
                \label{eq:opt_problem_full}
                    \max_\mcal{Y} &\ T(\mcal{Y})\  =\  \max \sum_{i=0}^{L} t_i \\
                                           s.t. &\quad t_0 \leq 1,\quad t_0 \leq N-s_1, \\
                                           &\quad t_i \leq s_i,\quad t_i \leq N-s_{i+1} \ \ \ \ \  \forall 1 \leq i \leq L-1, \numberthis \\
                                                &\quad t_L \leq 1, \quad t_L \leq s_L, \\
                                                &\quad s_i \in \mbb{Z}, \quad 0 \leq s_i \leq N \quad \forall 1 \leq i \leq L
                \end{align*}
                If we relax the optimization problem in \eqref{eq:opt_problem_full} by removing a subset of the constraints, then we get an upper bound to the maximum $T(\mcal{Y})$. Depending on whether $L$ is even or odd, we perform the relaxation differently.
\subsection{$L$ odd:}
Consider the following relaxed version of \eqref{eq:opt_problem_full}:
                \begin{align}
\label{eq:opt_problem_relaxed}
\nonumber                     & \max \sum_{i=0}^{L} t_i \\
                                           s.t. &\quad t_0 \leq 1, \quad t_L \leq 1, \\
\nonumber                                                &\quad t_i \leq s_i \quad\quad\quad\ \ \ \forall i \in \{2,4,6,\cdots,L-1 \}, \\
\nonumber                                                &\quad t_i \leq N-s_{i+1} \quad \forall i \in \{1,3,5,\cdots,L-2\}
                                            \end{align}
Since each variable $t_i$ is upper bounded by one constraint in \eqref{eq:opt_problem_relaxed} and the objective function is monotonically increasing in $t_i$, the optimal value of the objective function in \eqref{eq:opt_problem_relaxed} is
            \[
                \begin{aligned}
                    \max \sum_{i=0}^{L} t_i &= t_0 + \sum_{i=1}^{\frac{L-1}{2}} t_{2i} + \sum_{i=1}^{\frac{L-1}{2}} t_{2i-1} + t_L \\
                    &= 1 + \sum_{i=1}^{\frac{L-1}{2}} s_{2i} + \sum_{i=1}^{\frac{L-1}{2}} (N-s_{2i}) + 1 \\
                    &= 1 + \sum_{i=1}^{\frac{L-1}{2}} N + 1 = 2 + \frac{(L-1)N}{2}
                \end{aligned}
            \]
            Therefore, for odd $L$, we have:
            \[
                \max_{\mcal{Y}} T(\mcal{Y}) \leq \frac{4 + (L-1)N}{2}
            \]

\subsection{$L$ even:}
Through a different relaxation of \eqref{eq:opt_problem_full}, we get:
                \begin{align*}
                     & \max \sum_{i=0}^{L} t_i \\
                                                &s.t.\\
                                                &\quad t_0 \leq 1, \quad t_L \leq s_{L}, \\
                                                &\quad t_i \leq s_i \quad\quad\quad\ \ \ \forall i \in \{2,4,6,\cdots,L-2 \}, \\
                                                &\quad t_i \leq N-s_{i+1} \quad \forall i \in \{1,3,5,\cdots,L-1\}, \\
                \end{align*}
                Again, since each variable in the objective function is upper bounded by a single constraint and the objective function is monotonically increasing in all the variables, we have:
            \[
                \begin{aligned}
                    \max \sum_{i=0}^{L} t_i &= t_0 + \sum_{i=1}^{\frac{L}{2}} t_{2i} + \sum_{i=1}^{\frac{L}{2}} t_{2i-1} \\
                    &= 1 + \sum_{i=1}^{\frac{L}{2}} s_{2i} + \sum_{i=1}^{\frac{L}{2}} (N-s_{2i}) \\
                    &= 1 + \frac{LN}{2} = \frac{LN + 2}{2}
                \end{aligned}
            \]
            Therefore, for even $L$:
            \[
                \max_{\mcal{Y}} T(\mcal{Y}) \leq \frac{LN + 2}{2}
            \]
            This concludes the proof of Property \ref{prpty:minimalcuts_N2}.
\section{Gap Calculation for cuts in Theorem \ref{thm:k2_out_of_3}} \label{appendix:gaps}
In this appendix, following Theorem \ref{thm:k2_out_of_3}, we consider a network with $L=2$ layers of relays and $N=3$ relays per layer.
Our target is to develop an upper bound for $\widebar{C}$ with a controlled constant gap term.
For a cut $\mcal{Y} = \{S,\mb{y}_1,\mb{y}_2\}$, we define the cut capacity
    \begin{align}
        Cut(\mcal{Y}) \triangleq M(0)_{\{S\}}^{\{\mb{y}_1^c\}} + M(1)_{\{\mb{y}_1\}}^{\{\mb{y}_2^c\}} + M(2)_{\{\mb{y}_2\}}^{\{D\}}
        \label{eq:cut_capacity_appendix}
    \end{align}
    Define $\mcal{Q}$ to be a class of cuts $\mcal{Y}$, whose capacities have the same structure.
    Primarily, a cut class is closed under (i) Reindexing of relays within the same layer and (ii) Reflection.
    To demonstrate (i), consider the cuts $\widehat{\mcal{Y}} = \{S,\{2,3\},\{1\}\}$ and $\widetilde{\mcal{Y}} = \{S,\{1,3\},\{3\}\}$.
    If we observe the capacities of these cuts, it is clear that their structures are the same, but with different parameters.
\begin{align*}
    Cut(\widehat{\mcal{Y}}) = R^{(0)}_{S1} + M(1)_{\{2,3\}}^{\{1,2,3\}} \\
    Cut(\widetilde{\mcal{Y}}) = R^{(0)}_{S2} + M(1)_{\{1,3\}}^{\{1,2,3\}} \\
\end{align*}
For (ii), consider the capacities of the cuts $\breve{\mcal{Y}} = \{S,\{2,3\},\phi\}$ and $\mcal{Y}' = \{S, \{1,2,3\},\{1\}\}$.
\begin{align*}
    Cut(\breve{\mcal{Y}}) = R^{(0)}_{S1} + M(1)_{\{2,3\}}^{\{1,2,3\}} \\
    Cut(\mcal{Y}') = M(1)_{\{1,2,3\}}^{\{2,3\}} + R^{(2)}_{1D}
\end{align*}
Again, the structures are similar (link + $2 \times 3$ MIMO).
A simple observation here is that if $\mcal{Y} = \{S,\mb{y}_1, \mb{y}_2\} \in \mcal{Q}$, then its reflection $\mcal{Y}^r = \{S,\mb{y}_2^c, \mb{y}_1^c\} \in \mcal{Q}$.

In this appendix, we are interested in seven cut classes. For each of these, we develop a collection of upper bounds and deduce the constants incurred in each.
Although, the seven classes do not cover all $2^6$ cuts of our network, we will see that any cut outside these seven classes plays no part in our proof of Theorem \ref{thm:k2_out_of_3}.
Therefore, any cut that does not belong to any of these classes, will be represented by its true cut capacity $Cut(\mcal{Y})$ and hence incur no additional constant terms.

Throughout the forthcoming calculation, an inequality stacked with the symbol ``\eqref{eq:upper_functions_a}'' denotes a step performed using the upper bound in \eqref{eq:upper_functions_a}. Similarly, for a step using the upper bounds in \eqref{eq:upper_functions_a2} or \eqref{eq:upper_functions_b}, the inequality shall be stacked with ``\eqref{eq:upper_functions_b}'' or ``\eqref{eq:upper_functions_a2}'', respectively.
\subsection{\fbox{$\mcal{Q}_1 : \mcal{Y}_1 = \{S,\phi, \phi\}$ , $\mcal{Y}_1^r = \{S,\{1,2,3\},\{1,2,3\}\} : $}}
For this cut, we only need one upper bound
\begin{align*}
    Cut(\mcal{Y}_1) = M(1)_{\{S\}}^{\{1,2,3\}} \stackrel{\eqref{eq:upper_functions_b}}{\leq} \max_{i \in \{1,2,3\}} R^{(0)}_{Si} + \log(3)
\end{align*}
Denoting, the non-constant term by $f_{1}$, we have
\begin{align*}
Cut(\mcal{Y}_1) \leq f_{1} + \log(3)
\end{align*}

\subsection{\fbox{$\mcal{Q}_2 : \mcal{Y}_2 = \{S,\{a\}, \phi\}$ , $\mcal{Y}_2^r = \{S,\{1,2,3\},\{a',b'\}\} : $}}
Without loss of generality, let $a = 3$.
We only need one bound for this cut form
\begin{align*}
    Cut(\mcal{Y}_2) &= M(1)_{\{S\}}^{\{1,2\}} + M(1)_{\{3\}}^{\{1,2,3\}} \\
    &\stackrel{\eqref{eq:upper_functions_b}}{\leq} \max_{i \in \{1,2\}} R^{(0)}_{Si} + \log(2) + \max_{i \in \{1,2,3\}} R^{(1)}_{3i} + \log(3) \\
    & = f_{2} + \log(3) + \log(2)
\end{align*}
where the non-constant terms are represented by $f_2$:
\[
    f_2 = \max_{i \in \{1,2\}} R^{(0)}_{Si} + \max_{i \in \{1,2,3\}} R^{(1)}_{3i}
\]

\subsection{\fbox{$\mcal{Q}_3 : \mcal{Y}_3 = \{S,\{a\}, \{b\} \}$ , $\mcal{Y}_3^r = \{S,\{a',b'\},\{c',d'\}\} : $}}
Without loss of generality, let $a = 3$ and $b = 1$.
For this cut form, we have:
\begin{align*}
    C&ut(\mcal{Y}_3) \\
    & =M(1)_{\{S\}}^{\{1,2\}} + M(1)_{\{3\}}^{\{2,3\}} + R^{(2)}_{1D}\\
    &\stackrel{\eqref{eq:upper_functions_b}}{\leq} \max_{i \in \{1,2\}} R^{(0)}_{Si} + \log(2) + \max_{i \in \{2,3\}} R^{(1)}_{3i} + \log(2) + R^{(2)}_{1D}\\
    & = f_{3} + 2\log(2)
\end{align*}
where
\[
    f_3 = \max_{i \in \{1,2\}} R^{(0)}_{Si} + \max_{i \in \{2,3\}} R^{(1)}_{3i} + R^{(0)}_{1D}
\]

\subsection{\fbox{$\mcal{Q}_4 : \mcal{Y}_4 = \{S,\{a\}, \{b,c\} \} : $}}
Without loss of generality, let $a = 3, b = 1$ and $c = 2$.
Similar to previous types, we have:
\begin{align*}
    C&ut(\mcal{Y}_4) \\
    & =M(1)_{\{S\}}^{\{1,2\}} + R^{(1)}_{33} + M(1)_{\{1,2\}}^{\{D\}}\\
    &\stackrel{\eqref{eq:upper_functions_b}}{\leq} \max_{i \in \{1,2\}} R^{(0)}_{Si} + \log(2) + R^{(1)}_{33} + \max_{i \in \{1,2\}} R^{(2)}_{iD} + \log(2) \\
    & = f_{4} + 2\log(2)
\end{align*}
where
\[
    f_4 = \max_{i \in \{1,2\}} R^{(0)}_{Si} + R^{(1)}_{33} + \max_{i \in \{1,2\}} R^{(2)}_{iD}
\]
\subsection{\fbox{$\mcal{Q}_5 : \mcal{Y}_5 = \{S,\{a,b\},\phi\}$ , $\mcal{Y}_5^r = \{S,\{1,2,3\},\{a'\}\} : $}}
    Without loss of generality, consider $a = 2, b = 3$.
    We will be needing three different kinds of upper bounds for this cut, all of which come from upper bounding the $2 \times 3$ MIMO term.
    Consider the following
    \begin{align*}
        C&ut(\mcal{Y}_5) = R^{(0)}_{S1} + M(1)_{\{2,3\}}^{\{1,2,3\}} \\
        &\stackrel{\eqref{eq:upper_functions_b}}{\leq} R^{(0)}_{S1} + \max_{\substack{\mb{v} \subset \{1,2,3\},\\ \vert \mb{v} \vert = 2} }M(1)_{\{2,3\}}^{\{\mb{v}\}} + \log(3) \\
        & = f_{5,1} + \log(3)
    \end{align*}
    Similarly
    \begin{align*}
        C&ut(\mcal{Y}_5) = R^{(0)}_{S1} + M(1)_{\{2,3\}}^{\{1,2,3\}} \\
        & \stackrel{\eqref{eq:upper_functions_a}}{\leq} R^{(0)}_{S1} + M(1)_{\{2\}}^{\{1,2,3\}} + M(1)_{\{3\}}^{\{1,2,3\}} \\
        &\stackrel{\eqref{eq:upper_functions_b}}{\leq} R^{(0)}_{S1} + \hspace{-0.05in}\max_{i \in \{1,2,3\}}\hspace{-0.05in}R^{(1)}_{2i} + \log(3) + \hspace{-0.05in}\max_{i \in \{1,2,3\}} \hspace{-0.05in}R^{(1)}_{3i} + \log(3) \\
        & = f_{5,2} + 2\log(3)
    \end{align*}
    Alternatively, for any $p \in \{1,2,3\}$, we have
    \begin{align*}
        C&ut(\mcal{Y}_5) = R^{(0)}_{S1} + M(1)_{\{2,3\}}^{\{1,2,3\}} \\
        &\stackrel{\eqref{eq:upper_functions_a2}}{\leq} R^{(0)}_{S1} + M(1)_{\{2,3\}}^{\{p\}} + M(1)_{\{2,3\}}^{\{1,2,3\}\backslash\{p\}} \\
        &\stackrel{\eqref{eq:upper_functions_b}}{\leq} R^{(0)}_{S1} + \hspace{-0.05in}\max_{i \in \{2,3\}}\hspace{-0.05in}R^{(1)}_{ip} + \log(2) + M(1)_{\{2,3\}}^{\{1,2,3\}\backslash\{p\}} \\
        & = f_{5,3}(p) + \log(2)
    \end{align*}
    Therefore for full network cuts of the form $\mcal{Y}_5$, we have
    \begin{align*}
        C&ut(\mcal{Y}_5) = R^{(0)}_{S1} + M(1)_{\{2,3\}}^{\{1,2,3\}} \\
        &\leq \min\{f_{5,1} + \log(3), f_{5,2} + 2\log(3), f_{5,3}(p) + \log(2) \} \\
        &\leq \min\{f_{5,1}, f_{5,2}, f_{5,3}(p) \} + 2\log(3) \numberthis
    \end{align*}

\subsection{\fbox{$\mcal{Q}_6 : \mcal{Y}_6 = \{S,\{a,b\},\{c\}\} : $}}
    Without loss of generality, consider $a = 2, b = 3, c = 1$.
    For this cut, we need to compute upper bounds for the $2 \times 2$ MIMO term.
    Consider the following
    \begin{align*}
        C&ut(\mcal{Y}_6) = R^{(0)}_{S1} + M(1)_{\{2,3\}}^{\{2,3\}} + R^{(2)}_{1D} \\
        &\stackrel{\eqref{eq:upper_functions_a}}{\leq} R^{(0)}_{S1} + M(1)_{\{2\}}^{\{2,3\}}+ M(1)_{\{3\}}^{\{2,3\}} + R^{(2)}_{1D} \\
        &\stackrel{\eqref{eq:upper_functions_b}}{\leq} R^{(0)}_{S1} +  \max_{i \in \{2,3\}} R^{(1)}_{2i} + \max_{i \in \{2,3\}} R^{(1)}_{3i} + 2\log(2) + R^{(2)}_{1D} \\
        & = f_{6,1} + 2\log(2)
    \end{align*}
    where
    \[
        f_{6,1} = R^{(0)}_{S1} +  \max_{i \in \{2,3\}} R^{(1)}_{2i} + \max_{i \in \{2,3\}} R^{(1)}_{3i} + R^{(2)}_{1D}
    \]
    Performing the first decomposition a different way, we have:
    \begin{align*}
        C&ut(\mcal{Y}_6) = R^{(0)}_{S1} + M(1)_{\{2,3\}}^{\{2,3\}} + R^{(2)}_{1D} \\
        &\stackrel{\eqref{eq:upper_functions_a2}}{\leq} M(1)^{\{2\}}_{\{2,3\}}+ M(1)^{\{3\}}_{\{2,3\}} \\
        &\stackrel{\eqref{eq:upper_functions_b}}{\leq} R^{(0)}_{S1} + \max_{i \in \{2,3\}} R^{(1)}_{i2} + \max_{i \in \{2,3\}} R^{(1)}_{i3} + 2\log(2) + R^{(2)}_{1D} \\
        & = f_{6,2} + 2\log(2)
    \end{align*}
    where $f_{6,2}$ collects the non-constant terms.
    Finally we can choose not to upper bound the cut capacity at all, i.e.
    \[
        Cut(\mcal{Y}_6) = f_{6,3} = R^{(0)}_{S1} + M(1)_{\{2,3\}}^{\{2,3\}} + R^{(2)}_{1D}
    \]
    Therefore in conclusion, for a full network cut of the form $\mcal{Y}_6$, in general we have
    \begin{align*}
        Cut(\mcal{Y}_6) =&R^{(0)}_{S1} + M(1)_{\{2,3\}}^{\{2,3\}} + R^{(2)}_{1D} \\
        &\leq \min\{f_{6,1} + 2\log(2), f_{6,2} + 2\log(2), f_{6,3} \} \\
        &\leq \min\{f_{6,1}, f_{6,2}, f_{6,3} \} + 2\log(2) \numberthis
    \end{align*}

\subsection{\fbox{$\mcal{Q}_7 : \mcal{Y}_7 = \{S, \{1,2,3\}, \phi\} : $}}
    For this cut, we consider two upper bounds:
    \begin{align}
        \nonumber M(1)_{\{1,2,3\}}^{\{1,2,3\}} &\stackrel{\eqref{eq:upper_functions_b}}{\leq} \frac{3}{2} \max_{ \substack{\mb{u} \subset \{1,2,3\},\\ \mb{v} \subset \{1,2,3\},\\ \vert \mb{u} \vert = \vert \mb{v} \vert = 2}}\hspace{-0.05in} M(1)_{\{\mb{u}\}}^{\{\mb{v}\}} + \frac{3}{2} \log\left(\binom{3}{2}^2\right)  \\
        \nonumber &= \frac{3}{2} \max_{ \substack{\mb{u},\mb{v} \subset \{1,2,3\},\\ \vert \mb{u} \vert = \vert \mb{v} \vert = 2}}\hspace{-0.05in} M(1)_{\{\mb{u}\}}^{\{\mb{v}\}} + 3\log\left(3\right)\\
        &= f_{7,1} + 3\log\left(3\right)
    \end{align}
    We can also upper bound $\mcal{Y}_7$ by applying functions in \eqref{eq:upper_functions} with a different order as:
    \begin{align*}
        M(1)_{\{1,2,3\}}^{\{1,2,3\}} \stackrel{\eqref{eq:upper_functions_a}}{\leq}& M(1)_{\{p\}}^{\{1,2,3\}} + M(1)_{\{1,2,3 \}\backslash\{p\}}^{\{1,2,3\}} \\
        \stackrel{\eqref{eq:upper_functions_b}}{\leq}& \max_{i \in \{1,2,3\}} R^{(1)}_{pi} + \log(3)\\
        &+ \max_{\substack{\mb{v} \subset \{1,2,3\},\\ \vert \mb{v} \vert = 2}}M(1)_{\{1,2,3 \}\backslash\{p\}}^{\{\mb{v}\}} + \log(3) \numberthis
    \end{align*}
    Denoting the non-constant terms as $f_{7,2}(p)$, we have:
    \[
        M(1)_{\{1,2,3\}}^{\{1,2,3\}} \leq f_{7,2}(p) + 2\log(3)
    \]
    Therefore for cuts in $\mcal{Q}_7$, we have
    \begin{align*}
        &Cut(\mcal{Y}_7)\\
        &= M(1)_{\{1,2,3\}}^{\{1,2,3\}} \leq \min\{f_{7,1} + 3\log(3), f_{7,2}(p) + 2\log(3) \} \\
                                     &\leq \min\{f_{7,1}, f_{7,2}(p) \} + 3\log(3)  \numberthis
    \end{align*}
    For any cuts not in the above seven classes, we represent the cut with its true capacity \eqref{eq:cut_capacity_appendix}.
    We summarize the results from this appendix in Table \ref{tab:class_to_f_matching}.

\end{appendices}

\bibliographystyle{IEEEtran}
\bibliography{references}

\end{document}